\documentclass{article}

\usepackage{PRIMEarxiv}

\usepackage[utf8]{inputenc} 
\usepackage[T1]{fontenc}    
\usepackage{url}            
\usepackage{booktabs}       
\usepackage{amsfonts}       
\usepackage{nicefrac}       
\usepackage{microtype}      
\usepackage{fancyhdr}       
\usepackage{graphicx}       
\graphicspath{{media/}}     

\usepackage{amsmath}
\usepackage{amssymb}
\usepackage{amsthm}
\usepackage{stmaryrd}

\usepackage{xcolor}
\usepackage{tikz}
\usetikzlibrary{shapes.geometric, arrows, positioning, calc, arrows.meta}

\usepackage{hyperref}

\theoremstyle{plain}
\newtheorem{proposition}{Proposition}[section]

\theoremstyle{definition}

\newtheorem{assumption}{Assumption}[section]
\theoremstyle{remark}
\newtheorem{remark}{Remark}[section]

\DeclareMathOperator*{\argmax}{arg\,max}
\DeclareMathOperator{\Var}{Var}
\DeclareMathOperator{\Embed}{Embed}
\DeclareMathOperator{\Concat}{Concat}
\newcommand{\ctext}{c_{\text{text}}}   

\definecolor{navyblue}{RGB}{25, 55, 105}
\definecolor{burgundy}{RGB}{125, 25, 45}
\definecolor{forestgreen}{RGB}{20, 110, 60}

\title{Vision-Predictor-Action: Towards Efficient Embodied AI via Neuro-Symbolic JEPA Predictors
}

\author{
  Yeonseok Lee \\
  SLING AI Inc. \\
  Incheon, Republic of Korea \\
  \texttt{ylee@sling.ai.kr}
}

\begin{document}

\maketitle

\begin{abstract}
Vision-Language-Action (VLA) models commonly generate robot actions token by token through Large Language Model (LLM) decoders, so their inference latency grows with the length of the action sequence and limits the achievable closed-loop control frequency. We present the Goal-Conditioned Vision-Predictor-Action (VPA) framework, which replaces the autoregressive decoder with a lightweight latent control loop. A shared vision encoder and a frozen text encoder embed observations, goal images, and instructions; a neuro-symbolic selector chooses a discrete skill; a Joint-Embedding Predictive Architecture (JEPA) predictor forecasts the next latent state; and a flow-matching solver generates an entire action chunk in a few integration steps, so that the number of sequential network evaluations per decision step does not depend on the action horizon. We analyze the inference complexity of this design, its regularization against representation collapse, and the assumptions under which latent sub-goal thresholding can limit error accumulation across task phases. As an initial proof of concept, we implemented VPA in PyTorch and evaluated it on the LIBERO-10 long-horizon manipulation suite. The trainable model has fewer than $20$M parameters, and each released checkpoint is under $75$~MB, excluding a frozen $63.4$M-parameter CLIP text encoder; a decision step of a single-camera reference configuration takes approximately $10\text{--}14$~ms on an Apple M4; and, trained only on the LIBERO-10 demonstrations in under three hours on a single GPU, a policy observing the scene through two cameras succeeded in $40.0\%$ of $200$ evaluation episodes. These results come from a single training run and remain below previously reported success rates on this suite, but they suggest that the approach is workable. Code and model weights are available at \url{https://github.com/ylee-sling/Vision-Predictor-Action} and \url{https://huggingface.co/ylee-sling/vpa-libero10}, and the preprint is archived at \url{https://doi.org/10.5281/zenodo.22960920}.
\end{abstract}

\keywords{Embodied AI \and Neuro-Symbolic AI \and Joint-Embedding Predictive Architecture \and Flow Matching \and Robotic Manipulation}

\section{Introduction}

\begin{figure}[t]
\centering
\resizebox{\linewidth}{!}{%
\begin{tikzpicture}[
    font=\sffamily\small,
    rowlabel/.style={font=\sffamily\bfseries, anchor=west},
    io/.style={draw=gray!70, fill=gray!10, thick, rounded corners=3pt, minimum height=1.05cm, minimum width=2.3cm, align=center},
    llm/.style={draw=gray!70, fill=gray!25, thick, rounded corners=3pt, minimum height=1.05cm, minimum width=3.4cm, align=center},
    tok/.style={draw=gray!70, fill=white, thick, minimum size=0.6cm, inner sep=0pt},
    vpa/.style={draw=navyblue, fill=navyblue!8, thick, rounded corners=3pt, minimum height=1.05cm, minimum width=2.3cm, align=center},
    sol/.style={draw=burgundy, fill=burgundy!8, thick, rounded corners=3pt, minimum height=1.05cm, minimum width=2.3cm, align=center},
    cell/.style={draw=forestgreen, fill=forestgreen!18, minimum width=0.28cm, minimum height=1.05cm, inner sep=0pt},
    stat/.style={draw=forestgreen, fill=forestgreen!6, thick, rounded corners=4pt, minimum height=1.75cm, text width=3.65cm, align=center},
    garrow/.style={-Stealth, thick, draw=gray!60!black},
    narrow/.style={-Stealth, thick, draw=navyblue},
    note/.style={font=\sffamily\footnotesize, align=left, anchor=west}
]

\node[rowlabel, text=gray!60!black] at (0, 1.35) {Autoregressive VLA};
\node[io] (in1) at (1.15, 0) {Image +\\instruction};
\node[llm] (dec) at (4.75, 0) {Multi-billion-parameter\\LLM decoder};
\node[tok] (t1) at (7.6, 0) {$a_1$};
\node[tok] (t2) at (8.35, 0) {$a_2$};
\node[tok] (t3) at (9.1, 0) {$a_3$};
\node at (9.8, 0) {$\cdots$};
\node[tok] (tN) at (10.5, 0) {$a_N$};
\draw[garrow] (in1) -- (dec);
\draw[garrow] (dec) -- (t1);
\draw[garrow] (t3.north) to[out=135, in=45] node[above, font=\sffamily\scriptsize] {each token is fed back} ([xshift=0.9cm]dec.north);
\node[note, text=gray!60!black] at (11.25, 0) {$N = q \cdot H \cdot d_a$ sequential passes\\about 1--5\,Hz reported for RT-2~\cite{brohan2023rt}};

\node[rowlabel, text=navyblue] at (0, -1.55) {VPA (ours)};
\node[io] (in2) at (1.15, -2.85) {Images +\\instruction};
\node[vpa] (enc) at (3.9, -2.85) {Encoders\\{\scriptsize $E_\psi$ + frozen CLIP}};
\node[vpa] (sel) at (6.5, -2.85) {Skill selector\\{\scriptsize $\pi_\phi^h$}};
\node[vpa] (pre) at (9.1, -2.85) {JEPA predictor\\{\scriptsize $P_\omega$}};
\node[sol] (sol) at (11.7, -2.85) {Flow solver\\{\scriptsize $v_\theta$, $K$ steps}};
\foreach \i in {0,...,7} {\node[cell] (c\i) at (13.55 + 0.28*\i, -2.85) {};}
\node[font=\sffamily\scriptsize, align=center] at (14.53, -1.95) {whole chunk $A_t$};
\draw[narrow] (in2) -- (enc);
\draw[narrow] (enc) -- (sel);
\draw[narrow] (sel) -- (pre);
\draw[narrow] (pre) -- (sol);
\draw[-Stealth, thick, draw=burgundy] (sol) -- (c0);
\draw[navyblue, thick] ([yshift=-0.2cm]enc.south west) -- ++(0, -0.15) -- ([yshift=-0.35cm]sol.south east) -- ++(0, 0.15);
\node[font=\sffamily\footnotesize, text=navyblue] at (7.8, -4.15) {$K + 3$ sequential network evaluations per decision step, independent of the action horizon $H$};

\node[rowlabel, text=forestgreen] at (0, -5.0) {Initial results (preliminary)};
\node[stat] at (1.95, -6.25) {{\large\bfseries 10--14\,ms}\\[3pt]{\footnotesize per decision step\\on an Apple M4}};
\node[stat] at (6.0, -6.25) {{\large\bfseries \textless\,20M parameters}\\[3pt]{\footnotesize trainable; checkpoint\\of under 75\,MB}};
\node[stat] at (10.05, -6.25) {{\large\bfseries 40.0\% success}\\[3pt]{\footnotesize on LIBERO-10\\(2 cameras, 200 episodes)}};
\node[stat] at (14.1, -6.25) {{\large\bfseries \mbox{LIBERO-10 only}}\\[3pt]{\footnotesize no vision pre-training;\\\textless\,3\,h on one L4 GPU}};

\end{tikzpicture}
}
\caption{VPA at a glance. \emph{Top:} autoregressive VLAs decode actions token by token through a large language model, so the number of sequential forward passes grows with the action horizon; inference rates of about $1\text{--}5$~Hz have been reported for RT-2~\cite{brohan2023rt}. \emph{Middle:} VPA replaces the decoder with a short chain of small networks that produces an entire action chunk in $K + 3$ sequential evaluations. \emph{Bottom:} initial results of our reference implementation (Section~\ref{sec:experiments}). Latency was measured for the single-camera reference configuration (Section~\ref{subsec:latency_bound}); the parameter count and checkpoint size exclude the frozen CLIP text encoder ($63.4$M parameters), which is downloaded separately; the success rate comes from a single training run; and the models were trained only on the LIBERO-10 demonstrations. The VLA and VPA figures do not come from a controlled comparison.}
\label{fig:teaser}
\end{figure}

\begin{figure}[t]
\centering
\resizebox{\linewidth}{!}{%
\begin{tikzpicture}[
    node distance=0.8cm and 3.0cm,
    font=\sffamily\small,
    input/.style={draw=gray!70, fill=gray!10, thick, rounded corners=3pt, minimum width=3.4cm, minimum height=0.8cm, align=center},
    module/.style={draw=navyblue, fill=navyblue!8, thick, rounded corners=4pt, minimum width=3.6cm, minimum height=0.9cm, align=center},
    solver/.style={draw=burgundy, fill=burgundy!8, thick, rounded corners=4pt, minimum width=3.6cm, minimum height=1.1cm, align=center},
    output/.style={draw=forestgreen, fill=forestgreen!12, thick, rounded corners=4pt, minimum width=3.6cm, minimum height=0.9cm, align=center},
    arrow/.style={-Stealth, thick, draw=navyblue},
    sarrow/.style={-Stealth, thick, draw=burgundy}
]

\node[input] (visual) {Visual $I_t, I_g$ \& Text $x$};
\node[module, below=0.8cm of visual] (encoder) {\textbf{1. Multimodal Encoders}\\ViT \& CLIP Backbones};
\node[module, below=0.8cm of encoder] (selector) {\textbf{2. Primitive Selector $\pi_\phi^h$}\\Neuro-Symbolic Classifier};
\node[module, below=0.8cm of selector] (predictor) {\textbf{3. JEPA Predictor $P_\omega$}\\Latent Predictor};

\node[solver, right=3.0cm of selector] (solver) {\textbf{4. Flow-Matching Solver $\pi_\theta^l$}\\Action Trajectory Generator};
\node[input, at={(solver.north |- visual.north)}] (proprio) {Proprioception $S_t$};
\node[output, below=1.2cm of solver] (action) {\textbf{Action Chunk $A_t$}\\Executed Prefix of Length $H_t$};

\draw[arrow] (visual) -- (encoder);
\draw[arrow] (encoder) -- node[right, font=\scriptsize] {Latents $z_t, z_g^{(m_t)}, \ctext$} (selector);
\draw[arrow] (selector) -- node[right, font=\scriptsize] {Primitive $u_t$} (predictor);

\draw[sarrow] (proprio) -- node[right, font=\scriptsize] {$S_t$} (solver);
\draw[sarrow] (solver) -- (action);

\draw[sarrow] (encoder.east) -- node[above, sloped, font=\scriptsize, pos=0.45] {$z_t, \ctext$} ([yshift=0.25cm]solver.west);

\draw[sarrow] (selector.east) -- node[above, font=\scriptsize] {Token $u_t$} (solver.west);

\draw[sarrow] (predictor.east) -- node[below, sloped, font=\scriptsize, pos=0.45] {$\hat{z}_{t+1}, \sigma_{t+1}$} ([yshift=-0.25cm]solver.west);

\end{tikzpicture}
}
\caption{High-level architectural overview of the Goal-Conditioned Vision-Predictor-Action (VPA) framework. Multimodal observations are projected into latent and text embeddings that condition the selection of an active primitive $u_t$, the forecast of the next latent state $\hat{z}_{t+1}$ and its uncertainty $\sigma_{t+1}$ by a JEPA predictor, and a continuous flow-matching solver that generates an action chunk $A_t$, of which an uncertainty-dependent prefix of length $H_t$ is executed.}
\label{fig:architecture}
\end{figure}

Recent advances in embodied artificial intelligence have been largely driven by Vision-Language-Action (VLA) models, which extend web-scale pre-trained backbones to physical robotic manipulation \cite{brohan2023rt, kim2024openvla}. By mapping visual observations and natural language instructions into discrete token spaces, these systems exhibit strong semantic generalization across diverse tasks \cite{huang2023instruct2act}. However, many current VLA architectures emit actions through autoregressive Large Language Model (LLM) decoders \cite{zhu2026ns}. When continuous actions are discretized into tokens, generating an $H$-step action chunk of dimension $d_a$ requires $N = q \cdot H \cdot d_a$ tokens, where $q$ denotes the number of tokens per scalar action component, and hence $\Theta(N)$ sequential forward passes through a multi-billion-parameter Transformer per execution cycle (Section~\ref{subsec:latency_bound}). For RT-2, for example, real-time inference rates of about $1\text{--}5$~Hz have been reported \cite{brohan2023rt}, which restricts the ability of autoregressive VLAs to execute reactive, dynamic maneuvers or to compensate for unexpected physical perturbations in real time.

To mitigate these latency constraints, hybrid approaches such as NS-VLA \cite{zhu2026ns} introduce discrete neuro-symbolic skill pointers, while VLA-JEPA \cite{sun2026vla} incorporates joint-embedding predictive architectures (JEPA) \cite{lecun2022path, assran2023self} to model latent world dynamics \cite{maes2026leworldmodel, zhang2026hierarchical}. Despite these improvements, such systems remain coupled to large causal vision-language backbones (e.g., Qwen3-VL-2B) that consume substantial computational resources, and their high-level plans are expressed as predefined action symbols (e.g., \textit{pick}, \textit{place\_on}) that constitute the interface between reasoning and control. Concurrently, non-autoregressive trajectory generators based on Action Chunking with Transformers (ACT) \cite{zhao2023learning}, Diffusion Policies \cite{chi2025diffusion}, and Conditional Flow-Matching (CFM) \cite{lipman2022flow, albergo2022building, black2024pi} have demonstrated high-frequency motor execution ($30\text{--}100$~Hz). These methods, however, typically either operate as reactive visual policies without forward-predictive dynamics or explicit sub-goal tracking, or, when language-conditioned as in $\pi_0$ \cite{black2024pi}, retain a large vision-language backbone for instruction processing.

In this work, we present the \textbf{Goal-Conditioned Vision-Predictor-Action (VPA)} framework, which replaces autoregressive language decoding with a non-autoregressive latent control loop. As summarized in Figure~\ref{fig:teaser}, our reference implementation, trained only on the LIBERO-10 demonstrations, has fewer than $20$M trainable parameters (a checkpoint of under $75$~MB), runs a decision step in approximately $10\text{--}14$~ms on an Apple M4 for a single-camera reference configuration, and reaches a promising initial success rate of $40.0\%$ on LIBERO-10. By substituting the LLM decoder---the dominant source of sequential latency in token-based VLA models---with a lightweight JEPA predictor, VPA reduces both the parameter count and the sequential depth of the per-step computation. Rather than communicating through text tokens, the predictor supplies forecasted state-transition latents directly to the downstream action solver, which generates continuous and potentially multi-modal action distributions within the skill selected by a discrete primitive. VPA thus retains a symbolic vocabulary for high-level skill selection, but confines it to the choice of \emph{which} skill is executed, while the motor trajectory itself is synthesized in a continuous space. We note that this design does not remove the symbolic bottleneck entirely: the set of high-level behaviors that VPA can compose is bounded by its primitive dictionary, and zero-shot generalization to skills outside this dictionary is not addressed by the present formulation. As illustrated in Figure~\ref{fig:architecture}, VPA decouples high-level task logic from low-level continuous motor control across a four-stage pipeline:
\begin{enumerate}
    \item \textbf{Latent Multimodal Goal Encoding ($E_\psi$ and Text Encoder):} A non-generative, self-supervised vision encoder \cite{dosovitskiy2020image, oquab2023dinov2} maps sensory frames into a compact latent space $\mathcal{Z} \subset \mathbb{R}^d$, in which Euclidean distances $\Vert z_t - z_g \Vert_2$ are intended to serve as proxies for physical task progress, an idea related to the latent shaping potential of NS-VLA \cite{zhu2026ns}, which uses negative Euclidean distances to learned primitive prototypes. Concurrently, a lightweight, frozen text encoder (e.g., CLIP \cite{radford2021clip}) embeds the user's natural language instruction $x$ into a dense vector $\ctext$, injecting the overarching task intent without requiring an LLM decoder.    \item \textbf{Neuro-Symbolic Primitive Selector ($\pi_\phi^h$):} A non-autoregressive classifier selects the active primitive token $u_t \in \mathcal{U}$ from a fixed dictionary of $N_u$ atomic skills, conditioned on the current latent $z_t$, the milestone latent $z_g^{(m_t)}$, and the text embedding $\ctext$. This multimodal conditioning allows the selector to disambiguate task intent (e.g., ``move the cup'' versus ``throw it away''). Milestone progression is governed by a symbolic transition rule \cite{kress2009temporal} based on the latent threshold test $\Vert z_t - z_g^{(m_t)} \Vert_2 < \tau$, where $\tau$ is scaled by the running latent variance margin maintained by the VICReg objective rather than fixed a priori (Section~\ref{subsec:primitive_selector}).
    \item \textbf{JEPA Predictor ($P_\omega$):} A lightweight predictive dynamics model forecasts the next latent state together with a scalar uncertainty estimate, $(\hat{z}_{t+1}, \sigma_{t+1}) = P_\omega(z_t, u_t)$, and is regularized against representation collapse via Variance-Invariance-Covariance Regularization (VICReg) \cite{bardes2021vicreg}. The predictor is trained with conditioning on discrete primitive tokens and subsequently frozen, so that the downstream solver is trained against a stationary conditioning distribution rather than against latent inputs that drift as the representation evolves.
    \item \textbf{Flow-Matching Action Solver ($\pi_\theta^l$):} A conditional vector field $v_\theta$ synthesizes a continuous action chunk $A_t \in \mathbb{R}^{H \times d_a}$ jointly over all $H$ steps using $K \in \{1, 2, 3\}$ numerical ODE integration steps \cite{lipman2022flow}, of which a prefix of length $H_t \le H$, adapted to the predictor's uncertainty $\sigma_{t+1}$, is executed.
\end{enumerate}

\noindent \textbf{Implementation and Initial Validation.} We implemented VPA in PyTorch and evaluated it as an initial proof of concept on LIBERO-10, the long-horizon suite of the LIBERO manipulation benchmark \cite{liu2023libero}. The implemented equations (Eqs.~\ref{eq:encoder}--\ref{eq:dynamic_horizon}, \ref{eq:variance_loss}, and~\ref{eq:covariance_loss}) were each verified against independent double-precision reference computations, and the sequence of network evaluations in every decision step was audited at runtime to confirm the sequential depth of $K + 3$ stated in Proposition~\ref{prop:complexity}. The trainable modules $E_\psi$, $\pi_\phi^h$, $P_\omega$, and $v_\theta$ together contain fewer than $20$M parameters, so that each released checkpoint occupies under $75$~MB; the frozen CLIP text encoder contributes a further $63.4$M parameters, is downloaded separately, and lies outside the per-step dependency chain, because $\ctext$ is computed once per episode. Trained only on the LIBERO-10 demonstrations, with a randomly initialized vision encoder, $K = 2$, and $H = 16$, and in under three hours on a single NVIDIA L4 GPU, the trained policy succeeded in $40.0\%$ of $200$ evaluation episodes ($20$ per task; $95\%$ Wilson interval $33.5\text{--}46.9\%$) when the observation $I_t$ comprised a third-person and a wrist-mounted camera view, compared with $10.5\%$ ($7.0\text{--}15.5\%$) with the third-person view alone. Using two views changes only the input of $E_\psi$ and leaves all subsequent equations unchanged (Section~\ref{sec:experiments}). On an Apple M4 at batch size one, a complete decision step of the single-camera reference configuration took $10.5\text{--}14.2$~ms for $K \in \{1, 2, 3\}$ and $H \le 64$.

We wish to state the scope of these results plainly. They derive from a single training run per setting and $20$ trials per task, so the success rates carry uncertainties of roughly $\pm 7$ percentage points. They remain below the success rates previously reported on this suite for existing policies, including diffusion-based policies trained without large-scale pre-training \cite{kim2024openvla}, and they do not constitute a controlled comparison with other methods. Latency was measured for the network pipeline in isolation; closed-loop control frequency on physical hardware, robustness to dynamic perturbations, and the encoder regularity constants of Section~\ref{subsec:control_stability} were not evaluated and remain future work (Section~\ref{sec:evaluation}). The experiments also used a single goal image per task and primitive labels derived heuristically from the gripper command, so the multi-milestone sequencing mechanism was not exercised. We therefore regard the present results as initial evidence that the latent control loop can be trained on standard demonstration data and executed at low latency on commodity hardware, rather than as evidence of competitive task performance.

\noindent \textbf{Open-Source Release.} To support reproduction and further investigation, we release the PyTorch reference implementation, including the training, evaluation, and latency-benchmarking code, together with the trained LIBERO-10 checkpoints; the links are given in Section~\ref{sec:availability}.

\noindent \textbf{Summary of Contributions:}
\begin{itemize}
    \item \textbf{Horizon-Independent Sequential Depth:} We show that replacing autoregressive token generation with flow-matching action chunking reduces the number of sequential network evaluations per control cycle from $\Theta(N)$, with $N = q \cdot H \cdot d_a$, to $K + 3 = \mathcal{O}(K)$, a quantity independent of the action horizon $H$. The arithmetic cost of each solver evaluation nevertheless grows with the chunk dimensionality $H \cdot d_a$; wall-clock latency is therefore approximately horizon-independent only while this computation fits within the available hardware parallelism. Latency measurements of the reference implementation were consistent with this analysis: on an Apple M4, the encoder, selector, and predictor times remained approximately constant across $H$, and the time of one solver evaluation, $t_l(H)$, remained within $10\%$ of its value at $H = 8$ up to $H = 64$ before rising by $23\text{--}27\%$ at $H = 128$. For $H \le 64$, the summed network time of Eq.~\ref{eq:vpa_latency} was $6.2\text{--}9.5$~ms, and the complete decision step, including non-network operations and timing synchronization, took $10.5\text{--}14.2$~ms.
    \item \textbf{Anti-Collapse Representation Analysis:} We characterize the conditions under which the variance hinge penalty $v(Z)$ and the covariance decorrelation penalty $c(Z)$ exclude the constant (fully collapsed) representation from the global minimizers of the joint-embedding objective, without relying on pixel-reconstruction decoders. We further use the realized variance margin to calibrate the latent completion threshold $\tau$, so that $\tau$ remains commensurate with the scale of the latent space throughout training.
    \item \textbf{Conditional Error Containment Across Task Phases:} We make explicit the encoder regularity conditions under which discrete sub-goal thresholding $\Vert z_t - z_g^{(m_t)} \Vert_2 < \tau$ limits the physical deviation at each phase transition. These are an upper Lipschitz constant $L_E$, which can be enforced with respect to the input image via spectral normalization \cite{miyato2018spectral} and governs whether the transition is triggered as the physical state approaches the milestone, and a lower (co-Lipschitz) constant $\mu_E$ with respect to task-relevant physical state, which bounds the physical deviation at transition by $\tau / \mu_E$. Because standard Vision Transformers are not globally Lipschitz \cite{kim2021lipschitz}, and because the lower constant depends on the relation between images and physical state and cannot be enforced through architectural constraints alone, we present this error containment as an empirical expectation to be verified rather than as a formal guarantee; the LIBERO-10 experiments reported in this work do not yet estimate $L_E$ or $\mu_E$.
    \item \textbf{Open-Source Reference Implementation and Proof of Concept:} We provide a PyTorch implementation of the complete pipeline, verified equation by equation against independent reference computations, and report preliminary closed-loop results on LIBERO-10, together with ablations over the camera views and the executed horizon $H_t$ (Section~\ref{sec:experiments}). The code and trained checkpoints are publicly available (Section~\ref{sec:availability}).
\end{itemize}

\subsection{Code and Model Availability}
\label{sec:availability}

The reference implementation, the trained checkpoints, and the archived preprint are publicly available:
\begin{itemize}
    \item \textbf{GitHub:} \url{https://github.com/ylee-sling/Vision-Predictor-Action} \\
    PyTorch source code for the four stages of the pipeline, the two-stage training and closed-loop LIBERO evaluation scripts, the latency benchmark of Section~\ref{subsec:latency_bound}, per-module self-tests that compare each implemented equation with an independent reference computation, and the LIBERO-10 evaluation results of Section~\ref{sec:experiments}. Evaluation videos for three trials per task and model are attached to the repository's release page.
    \item \textbf{Hugging Face:} \url{https://huggingface.co/ylee-sling/vpa-libero10} \\
    The one-camera and two-camera LIBERO-10 checkpoints reported in Section~\ref{sec:experiments}, in PyTorch and safetensors formats, together with their model configurations.
    \item \textbf{Zenodo:} \url{https://doi.org/10.5281/zenodo.22960920} \\
    The archived record of this preprint, from which later versions are also accessible.
\end{itemize}
The code and the checkpoints are released under the GNU General Public License, version~3 (GPL-3.0-only); the checkpoints were trained on the LIBERO datasets \cite{liu2023libero}, whose terms also apply. The preprint is distributed under the Creative Commons Attribution 4.0 International license (CC BY 4.0). The models were trained on Linux with an NVIDIA L4 GPU (Section~\ref{subsec:libero_setup}). The self-tests and latency benchmarks were run on macOS~15 with an Apple M4 processor (MPS backend), Python~3.13, and PyTorch~2.14.1; the multi-camera ($V = 2$) self-tests and the single-camera regression test were also run on CPU on the Linux training instance, with the CUDA build of PyTorch~2.14.1. We would be grateful for reports of errors or of attempts to reproduce the results.

\section{Related Work}
\label{sec:related_work}

\subsection{Vision-Language-Action Models}
\label{subsec:related_vla}
Recent advancements in embodied artificial intelligence have been largely driven by Vision-Language-Action (VLA) models, such as RT-2 \cite{brohan2023rt}, OpenVLA \cite{kim2024openvla}, and Instruct2Act \cite{huang2023instruct2act}. These architectures extend web-scale pre-trained backbones to map visual observations and natural language instructions into multi-modal action spaces. However, VLAs that emit discretized action tokens through an autoregressive decoder incur an inherent sequential bottleneck: generating a multi-step action chunk requires $N$ sequential forward passes through a multi-billion-parameter Transformer decoder, where $N$ grows linearly with the action horizon, resulting in $\Theta(N)$ sequential decoding steps per execution cycle. For RT-2, real-time inference rates of about $1\text{--}3$~Hz for the 55B-parameter model and about $5$~Hz for its 5B-parameter variant have been reported \cite{brohan2023rt}; such rates reduce the effectiveness of autoregressive VLAs for real-time dynamic manipulation and for compensating physical perturbations. Generalist policies that replace token decoding with continuous generative action heads, such as Octo \cite{octo2024} and $\pi_0$ \cite{black2024pi}, partially alleviate this bottleneck; we discuss these approaches in Section~\ref{subsec:related_flow}.

\subsection{Neuro-Symbolic Robotics}
\label{subsec:related_ns}
To combine high-level semantic reasoning with continuous motor execution, neuro-symbolic robotics incorporates formal logic and discrete task structures. Classical approaches utilize Linear Temporal Logic (LTL) and automata to enforce reactive mission planning and safety constraints \cite{kress2009temporal}. More recently, hybrid frameworks like NS-VLA \cite{zhu2026ns} introduce discrete skill pointers to decompose complex manipulation tasks into atomic primitives. Nevertheless, existing neuro-symbolic VLA architectures continue to rely on causal Vision-Language Models (VLMs) or autoregressive text parsers for instruction processing and high-level logic; consequently, they inherit the sequential inference latency bottleneck, which limits high-frequency closed-loop execution. Concurrent work has also begun to examine neuro-symbolic formulations of JEPA, including approaches that couple knowledge and rule embeddings with joint-embedding prediction \cite{huang2026knowledge} and hybrid symbolic-neural latent predictors \cite{huang2026sjepa}. Our framework builds upon the principles of Neuro-Symbolic Joint-Embedding Predictive Architecture (NS-JEPA) \cite{lee2026nsjepa}, replacing autoregressive language parsers with a non-autoregressive primitive selector coupled directly to a latent predictive world model. We emphasize that the symbolic component of VPA is deliberately minimal: its transition rule (Section~\ref{subsec:primitive_selector}) realizes a monotone sequential automaton over a fixed primitive dictionary rather than a general temporal-logic specification, and it therefore inherits the expressivity limits of that dictionary.

\subsection{Predictive World Models and Joint-Embedding Architectures}
\label{subsec:related_jepa}
In contrast to generative world models that reconstruct high-dimensional visual pixels, Joint-Embedding Predictive Architectures (JEPAs) \cite{lecun2022path, assran2023self} forecast future environment dynamics directly within abstract latent spaces. Because they dispense with pixel-level reconstruction decoders, JEPAs are not required to allocate model capacity to stochastic visual nuisances such as dynamic lighting and background shifts. Recent works, including LeWorldModel \cite{maes2026leworldmodel} and hierarchical latent planning frameworks \cite{zhang2026hierarchical}, construct stable, non-collapsing latent spaces for world modeling and planning. The regularizers employed in this line of work range from variance--covariance penalties such as VICReg \cite{bardes2021vicreg} to distribution-matching objectives that encourage isotropic Gaussian embeddings, as in LeJEPA \cite{balestriero2025lejepa}. At scale, V-JEPA~2 \cite{assran2025vjepa2} shows that an action-conditioned latent predictor, post-trained on a modest amount of robot interaction data, can support zero-shot manipulation through planning in representation space, albeit at the cost of iterative optimization over action sequences at inference time.

In robotic control, joint-embedding mechanics have been applied both as predictive dynamics models and as visual feature extractors. VLA-JEPA \cite{sun2026vla} introduces a unified pretraining framework with an online, time-causal latent world model to forecast future states. Conversely, JEPA-VLA \cite{miao2026jepa} argues that standard visual backbones lack temporal policy priors and employs frozen video-predictive JEPA features as plug-and-play visual representations inside existing VLA architectures via gated cross-attention. Additionally, ACT-JEPA \cite{vujinovic2026act} employs latent prediction to regularize continuous imitation learning policies. While these approaches enhance visual representations or latent forward modeling, they either lack explicit, deterministic state-machine mechanisms for sub-goal verification or remain coupled to latency-heavy, text-conditioned causal language backbones.

\subsection{Diffusion and Flow-Matching Policies}
\label{subsec:related_flow}
To enable high-frequency continuous control, non-autoregressive trajectory generation methods have emerged as an effective alternative. Action Chunking with Transformers (ACT) \cite{zhao2023learning} predicts multi-step continuous action trajectories $A_t \in \mathbb{R}^{H \times d_a}$ in a single forward pass, which reduces the effective number of decisions per episode and has been shown to mitigate compounding errors in imitation learning. Building upon this, Diffusion Policies \cite{chi2025diffusion} and Conditional Flow Matching (CFM) \cite{lipman2022flow, albergo2022building} formulate action-chunk generation as iterative denoising or as transport along a learned ODE vector field, respectively, enabling expressive multi-modal trajectory sampling; straight or nearly straight transport paths further permit accurate integration in few steps \cite{liu2022rectified}. When combined with action chunking, these continuous solvers support real-time execution rates ($30\text{--}100$~Hz). However, they typically function either as reactive visual policies that lack long-horizon predictive dynamics, symbolic verification, or explicit sub-goal state tracking, or as action heads attached to generalist transformer backbones \cite{octo2024}, which in the case of $\pi_0$ \cite{black2024pi} is a pre-trained vision-language model with billions of parameters. VPA adopts a CFM solver for low-level generation but conditions it on forecasted latent transitions and a lightweight text embedding, rather than on the hidden states of a large language backbone.

\section{Problem Formulation}
\label{sec:problem_formulation}

We formulate visual-goal robotic manipulation as a Goal-Conditioned Markov Decision Process (G-MDP) defined by the tuple $\mathcal{M} = (\mathcal{S}, \mathcal{G}, \mathcal{A}, \mathcal{P}, \mathcal{R}, \gamma_{\text{disc}})$, where $\mathcal{S}$ represents the continuous state space, $\mathcal{G}$ denotes the multimodal goal space, $\mathcal{A}$ is the continuous action-chunk space, $\mathcal{P}: \mathcal{S} \times \mathcal{A} \rightarrow \Delta(\mathcal{S})$ represents the stochastic chunk-level transition dynamics, $\mathcal{R}$ is the goal-conditioned (milestone-dependent) reward function, and $\gamma_{\text{disc}} \in [0, 1)$ is the discount factor. We write $\gamma_{\text{disc}}$ to distinguish the discount factor from the VICReg variance margin $\gamma$ introduced in Section~\ref{subsec:anti_collapse}.

\subsection{State, Observation, and Goal Spaces}
\label{subsec:state_goal}
At control step $t$, the system state $s_t \in \mathcal{S}$ pairs raw visual sensory inputs with low-level kinematic feedback:
\begin{equation}
s_t = \left( I_t, S_t \right) \in \mathcal{S}
\label{eq:state}
\end{equation}
where $I_t \in \mathbb{R}^{H_{\text{img}} \times W_{\text{img}} \times C}$ denotes the raw RGB observation frame with height $H_{\text{img}}$, width $W_{\text{img}}$, and $C$ channels, while $S_t \in \mathbb{R}^{d_s}$ represents the instantaneous robot proprioceptive state (e.g., joint positions, joint velocities, and end-effector pose).

The pair $s_t$ is an observation of an underlying physical configuration that is not directly accessible. We denote by $y_t \in \mathcal{Y} \subset \mathbb{R}^{d_y}$ the task-relevant physical state (e.g., the robot joint configuration and the poses of manipulated objects) and model image formation as $I_t = \mathcal{I}(y_t, \eta_t)$, where $\mathcal{I}$ is the image-formation (rendering) map and $\eta_t$ collects task-irrelevant nuisance factors such as illumination, background content, and sensor noise. Following common practice, we treat $s_t$ as the agent's state, implicitly assuming that it is approximately Markovian for the control task; the distinction between $y_t$ and $I_t$ is made explicit only in the analysis of Section~\ref{subsec:control_stability}.

To form a Goal-Conditioned VPA framework, the overarching task intent is specified by natural language alongside visual targets. The multimodal goal space $\mathcal{G}$ is defined by a tuple pairing a user language instruction $x$ with target goal frames $I_g$:
\begin{equation}
g = (x, I_g) \in \mathcal{G}
\label{eq:goal}
\end{equation}
The language string $x$ provides the semantic intent of the task. For multi-stage compositional tasks, the visual component $I_g$ is an ordered sequence of $M$ visual milestone goals $(I_g^{(1)}, I_g^{(2)}, \dots, I_g^{(M)})$, with $I_g^{(m)} \in \mathbb{R}^{H_{\text{img}} \times W_{\text{img}} \times C}$; single-stage tasks correspond to $M = 1$. Each milestone frame is an observation of a target physical configuration, $I_g^{(m)} = \mathcal{I}(y_g^{(m)}, \eta_g^{(m)})$, and its latent embedding is denoted $z_g^{(m)} = E_\psi(I_g^{(m)})$.

\subsection{Action Space and Action Chunking}
\label{subsec:action_space}
To account for sensorimotor latency and to reduce single-step compounding execution drift, the controller operates over continuous multi-step action chunks rather than instantaneous atomic commands. Let $\mathcal{A}_0 \subseteq \mathbb{R}^{d_a}$ denote the per-step action space, where $d_a$ is the dimensionality of the robot's action space (e.g., 7-DoF joint command vectors alongside 1-DoF parallel-gripper actuation). At step $t$, the controller selects an open-loop action chunk over horizon $H$:
\begin{equation}
A_t = (a_t, a_{t+1}, \dots, a_{t+H-1}) \in \mathcal{A} = \mathcal{A}_0^{H} \subseteq \mathbb{R}^{H \times d_a}
\label{eq:action_chunk}
\end{equation}
As detailed in Section~\ref{subsec:action_chunking}, the solver always generates a chunk of length $H$, of which an uncertainty-dependent prefix of length $H_t \le H$ is executed before the inference pipeline is invoked again. We refer to the control steps at which the pipeline is invoked as \emph{decision steps}; the primitive $u_t$, the milestone pointer $m_t$, and the conditioning vector $e_t$ are evaluated at decision steps and held fixed during chunk execution.

\subsection{Transition Dynamics and Latent Metric Reward}
\label{subsec:transition_reward}
The ground-truth physical dynamics of the environment govern state transitions at the chunk level according to:
\begin{equation}
s_{t+H} \sim \mathcal{P}\left(\cdot \mid s_t, A_t\right)
\label{eq:transition}
\end{equation}
where $\mathcal{P}$ is induced by composing the per-step dynamics over the actions of the chunk (or over its executed prefix of length $H_t$).

Rather than predicting high-dimensional pixel transitions $p(I_{t+H} \mid I_t, A_t)$ directly, the dynamics are modeled in a compact latent space $\mathcal{Z} \subset \mathbb{R}^d$ obtained via the visual encoder $E_\psi$. Within $\mathcal{Z}$, the latent transition dynamics are modeled by the forward predictor $P_\omega$ under the active discrete primitive token $u_t \in \mathcal{U}$:
\begin{equation}
z_t = E_\psi(I_t), \qquad (\hat{z}_{t+1}, \sigma_{t+1}) = P_\omega(z_t, u_t)
\label{eq:latent_dynamics}
\end{equation}
where $\sigma_{t+1} \ge 0$ is a scalar predictive uncertainty (Section~\ref{subsec:jepa_world_model}). The predictor operates at a fixed temporal stride of $\nu \in \mathbb{N}$ control steps, and we assume $\nu \le H_{\min}$, where $H_{\min}$ is the minimum executed chunk length (Section~\ref{subsec:action_chunking}), so that the forecast horizon falls within every executed chunk. For notational brevity, we write $z_{t+1}$ and $\hat{z}_{t+1}$ for the actual and forecast latents $z_{t+\nu}$ and $\hat{z}_{t+\nu}$ throughout. Because $P_\omega$ is conditioned on the primitive $u_t$ rather than on the action chunk $A_t$, it models the expected latent consequence of executing a skill, as represented in the demonstration data, rather than the consequence of a specific motor trajectory; the motor-level realization is delegated to the solver of Section~\ref{subsec:action_chunking}.

The sparse goal-reaching objective is approximated by a dense latent proxy reward based on the Euclidean distance to the active milestone:
\begin{equation}
\mathcal{R}(s_t, g; m_t) = - \left\| E_\psi(I_t) - E_\psi\big(I_g^{(m_t)}\big) \right\|_2 = - \left\| z_t - z_g^{(m_t)} \right\|_2
\label{eq:latent_reward}
\end{equation}
where $m_t \in \{1, \dots, M\}$ is the active milestone pointer. Since $\mathcal{R}$ depends on $m_t$, the pointer acts as a discrete component of an augmented state $(s_t, m_t)$. In the present formulation, $\mathcal{R}$ serves as an analysis and monitoring signal rather than as an optimization target, as all components of VPA are trained by self-supervised latent prediction or by imitation of demonstrations (Section~\ref{sec:architecture}). Whether this proxy is monotonically related to physical task progress depends on the regularity of $E_\psi$ with respect to $y_t$, which we examine in Section~\ref{subsec:control_stability}.

Milestone $m_t$ is regarded as completed when the latent distance falls below a completion threshold $\tau > 0$, such that $\| z_t - z_g^{(m_t)} \|_2 < \tau$. Because the scale of $\mathcal{Z}$ is not fixed by the joint-embedding objective, $\tau$ is calibrated relative to the running latent variance margin rather than set to an absolute constant (Section~\ref{subsec:primitive_selector}). The task is complete when this condition is satisfied for the final milestone $m_t = M$.

\section{The Goal-Conditioned VPA Architecture}
\label{sec:architecture}

\begin{figure*}[htbp]
\centering
\resizebox{0.9\textwidth}{!}{%
\begin{tikzpicture}[
    >=stealth,
    box/.style={draw, rounded corners, thick, minimum width=3.2cm, minimum height=0.9cm, align=center, fill=blue!5, font=\small},
    data/.style={draw, minimum width=2.5cm, minimum height=0.7cm, align=center, fill=gray!10, font=\small},
    concat/.style={draw, rounded corners, thick, minimum width=9cm, minimum height=0.9cm, align=center, fill=yellow!10, font=\small}
]

\node[data] (it) at (0, 0) {$I_t$ (Current Frame)};
\node[data] (ig) at (6, 0) {Milestone $I_g^{(m_t)}$ \& Text $x$};

\node[box] (enc_t) at (0, -1.3) {Encoder $E_\psi$};
\node[box] (enc_g) at (6, -1.3) {Encoder $E_\psi$ \& Text Enc};

\node[data] (zt) at (0, -2.6) {Latent $z_t$};
\node[data] (zg) at (6, -2.6) {Latent $z_g^{(m_t)}$ \& $\ctext$};

\node[box] (pi_h) at (3, -4.0) {Neuro-Symbolic Selector $\pi_\phi^h$};

\node[data] (ut) at (3, -5.3) {Token $u_t$};

\node[box] (jepa) at (0, -6.6) {Predictor $P_\omega$};
\node[data] (st) at (6, -6.6) {Proprioception $S_t$};

\node[data] (zhat) at (0, -8.1) {Predicted $(\hat{z}_{t+1}, \sigma_{t+1})$};

\node[concat] (cat) at (3, -9.6) {Unified Conditioning $e_t = \Concat\big(\Embed(u_t), z_t, \hat{z}_{t+1}, S_t, \ctext\big)$};

\node[box] (pi_l) at (3, -11.0) {Flow-Matching Solver $\pi_\theta^l$};

\node[data] (at) at (3, -12.3) {Action Chunk $A_t \in \mathbb{R}^{H \times d_a}$\\(execute first $H_t$ steps, Eq.~\ref{eq:dynamic_horizon})};

\draw[->, thick] (it) -- (enc_t);
\draw[->, thick] (ig) -- (enc_g);
\draw[->, thick] (enc_t) -- (zt);
\draw[->, thick] (enc_g) -- (zg);

\draw[->, thick, rounded corners] (zt.east) -| (pi_h.150);
\draw[->, thick, rounded corners] (zg.west) -| (pi_h.30);
\draw[->, thick] (pi_h) -- (ut);

\draw[->, thick] (zt.south) -- (jepa.north);
\draw[->, thick, rounded corners] (ut.west) -| ([xshift=1.0cm]jepa.north);
\draw[->, thick] (jepa) -- (zhat);

\draw[->, thick] (ut.south) -- (cat.north);
\draw[->, thick] (zhat.south) -- (zhat.south |- cat.north);
\draw[->, thick] (st.south) -- (st.south |- cat.north);

\draw[->, thick, dashed, rounded corners] (zt.west) -- ++(-1.5,0) |- (cat.west);
\draw[->, thick, dashed, rounded corners] (zg.east) -- ++(1.5,0) |- (cat.east);

\draw[->, thick] (cat) -- (pi_l);
\draw[->, thick] (pi_l) -- (at);

\end{tikzpicture}
} 
\caption{VPA Closed-Loop Inference Pipeline (Action Trajectory Generation).
This diagram shows the online forward pass executed at each decision step during robot deployment. The current observation $I_t$, the active milestone goal $I_g^{(m_t)}$, and the natural language instruction $x$ are encoded and used to select a primitive $u_t$, forecast $(\hat{z}_{t+1}, \sigma_{t+1})$, and form the unified conditioning vector $e_t$ (Eq.~\ref{eq:conditioning_vector}), from which the solver synthesizes an action chunk $A_t$. The predictive uncertainty $\sigma_{t+1}$ determines the executed prefix length $H_t$ (Eq.~\ref{eq:dynamic_horizon}), and the milestone pointer $m_t$ is updated by the threshold rule of Eq.~\ref{eq:state_transition}.
}
\label{fig:vpa_arch}
\end{figure*}

Current Vision-Language-Action (VLA) architectures that decode actions autoregressively are constrained by a sequential computing bottleneck on control frequency. To overcome this, we introduce the Goal-Conditioned Vision-Predictor-Action (VPA) framework. The architecture bypasses large language model (LLM) backbones and formulates robotic manipulation as a closed-loop latent prediction problem.

The architecture decouples high-level task logic from low-level continuous motor control through a four-stage pipeline: (1) a latent multimodal goal encoder that projects raw observations and text into a compact latent space; (2) a non-autoregressive neuro-symbolic selector that discretizes task primitives; (3) a collapse-regularized latent predictor that forecasts state transitions; and (4) a continuous flow-matching solver that outputs multi-step action trajectories. The number of sequential network evaluations per decision step in this pipeline does not depend on the action horizon (Section~\ref{subsec:latency_bound}), which keeps the per-step computation short, as reactive control requires, while retaining semantic task specification. The complete data flow is illustrated in Figure~\ref{fig:vpa_arch}.

\subsection{Latent Multimodal Goal Encoding}
\label{subsec:visual_encoding}

The perception module in the VPA framework is designed to retain task-relevant spatial and physical configurations while suppressing high-frequency visual nuisance factors $\eta_t$ (e.g., lighting variations, reflective glare, or dynamic backgrounds). Let $I_t \in \mathbb{R}^{H_{\text{img}} \times W_{\text{img}} \times C}$ denote the current observation frame at timestep $t$, and let $I_g^{(m)} \in \mathbb{R}^{H_{\text{img}} \times W_{\text{img}} \times C}$ denote the visual goal frame of milestone $m \in \{1, \dots, M\}$.

We employ a shared vision encoder $E_\psi$ (e.g., a Vision Transformer backbone \cite{dosovitskiy2020image}), which operates as a Siamese network to independently map the high-dimensional image frames into a compact, continuous latent embedding space $\mathcal{Z} \subset \mathbb{R}^d$:
\begin{equation}
    z_t = E_\psi(I_t), \qquad z_g^{(m)} = E_\psi\big(I_g^{(m)}\big), \quad m = 1, \dots, M
    \label{eq:encoder}
\end{equation}
where $z_t$ is intended to encode the current configuration of the robot end-effector and scene objects, and $z_g^{(m)}$ the target configuration of milestone $m$. The milestone embeddings depend only on the task specification and can be computed once per episode.

To capture the overarching task intent without relying on a computationally heavy LLM, we introduce a lightweight, frozen text encoder (e.g., CLIP \cite{radford2021clip} or a distilled RoBERTa \cite{liu2019roberta}). This encoder embeds the user's natural language instruction $x$ into a dense text vector $\ctext \in \mathbb{R}^{d_c}$, which is likewise computed once per episode.

Unlike standard autoencoders or generative models that minimize a pixel-level reconstruction loss $\mathcal{L}_{\text{MSE}}(I_t, \hat{I}_t)$, $E_\psi$ is optimized via a joint-embedding predictive objective (Section~\ref{subsec:jepa_world_model}). Because this objective does not require the representation to retain information that is unpredictable from one frame to the next, it encourages, but does not guarantee, invariance to the nuisance factors $\eta_t$ and a focus on predictable physical quantities such as kinematics, spatial geometry, and relative object distances. The latent distance $\Vert z_t - z_g^{(m)} \Vert_2$ is accordingly intended to serve as a proxy for physical task progress; the extent to which it does so is governed by the regularity of the composite map $y \mapsto E_\psi(\mathcal{I}(y, \eta))$ from physical state to latent, which we analyze in Section~\ref{subsec:control_stability}. In the base formulation, no explicit Lipschitz constraint is imposed on $E_\psi$; the consequences of this choice, and the role of optional spectral normalization, are discussed in the same section. The text embedding $\ctext$ provides the semantic conditioning for the discrete primitive transitions and the continuous action generation modeled in subsequent stages.

\subsection{Neuro-Symbolic Primitive Selector ($\pi_\phi^h$)}
\label{subsec:primitive_selector}

A fundamental limitation of autoregressive VLA models is their reliance on computationally heavy text generation to dictate sequential task logic. The VPA architecture instead formulates high-level reasoning as a discrete, latent-conditioned classification problem governed by a symbolic finite-state machine. The per-decision cost of this stage is a single classifier evaluation, independent of the number of milestones $M$.

Let $\mathcal{U} = \{u^{(1)}, u^{(2)}, \dots, u^{(N_u)}\}$ define a pre-determined dictionary of $N_u$ executable atomic skills, or ``primitives'' (e.g., \textit{reach}, \textit{grasp}, \textit{push}, \textit{place}). To handle multi-stage tasks, the overarching visual goal is specified as a sequence of milestones, represented in the latent space as $(z_g^{(1)}, z_g^{(2)}, \dots, z_g^{(M)})$ (Eq.~\ref{eq:encoder}).

We introduce a discrete, monotonically non-decreasing milestone pointer $m_t \in \{1, 2, \dots, M\}$ that tracks the active sub-goal at decision step $t$. The neuro-symbolic primitive selector is a lightweight classifier $\pi_\phi^h$ (parameterized by a Multi-Layer Perceptron) that maps the current state embedding, the active milestone embedding, and the text embedding $\ctext$ directly to a categorical primitive token $u_t$:
\begin{equation}
    u_t = \argmax_{u \in \mathcal{U}} \; \pi_\phi^h\big(u \mid z_t, z_g^{(m_t)}, \ctext\big)
    \label{eq:primitive_selection}
\end{equation}
The selector is trained by minimizing the cross-entropy between $\pi_\phi^h$ and the primitive labels provided by the demonstration data (Figure~\ref{fig:anti_collapse_optimization}). By incorporating $\ctext$ as a conditioning variable, the selector can disambiguate task intent (e.g., whether the instruction is to ``move the cup'' or to ``throw the cup away'') without generating text tokens. It acts as a reactive policy that selects the skill expected to satisfy both the semantic instruction and the geometric goal by reducing the latent discrepancy between $z_t$ and $z_g^{(m_t)}$.

The progression of the task is governed by a symbolic transition rule that tests sub-goal completion via an $\ell_2$-norm threshold in $\mathcal{Z}$. A threshold $\tau$ fixed in absolute units, however, is ill-suited to a jointly trained latent space: the variance hinge of the VICReg objective (Eq.~\ref{eq:variance_loss}) penalizes per-dimension standard deviations below the margin $\gamma$ but does not bound them from above, so the overall scale of $\mathcal{Z}$ can drift during training, and a static $\tau$ would become relatively looser or stricter as training proceeds. We therefore scale $\tau$ with a running estimate of the realized latent variance margin. Let $Z^{(n)} \in \mathbb{R}^{B \times d}$ denote the batch of online-encoder latents at training iteration $n$ (Section~\ref{subsec:anti_collapse}). The milestone pointer is updated at each decision step, and the threshold is calibrated during training, according to:
\begin{subequations}
\label{eq:transition_rule}
\begin{align}
    m_{t+1} &=
    \begin{cases}
        m_t + 1, & \text{if } \Vert z_t - z_g^{(m_t)} \Vert_2 < \tau \text{ and } m_t < M, \\
        m_t, & \text{otherwise},
    \end{cases}
    \label{eq:state_transition} \\
    \bar{\gamma}_n &= (1 - \alpha_\gamma)\, \bar{\gamma}_{n-1} + \alpha_\gamma \sqrt{\frac{1}{d} \sum_{j=1}^{d} \Var\big(Z^{(n)}_{\cdot, j}\big) + \epsilon},
    \label{eq:running_margin} \\
    \tau_n &= \kappa \sqrt{2d}\; \bar{\gamma}_n,
    \label{eq:dynamic_threshold}
\end{align}
\end{subequations}
where $\alpha_\gamma \in (0, 1]$ is an exponential-moving-average rate, $\epsilon$ is the numerical constant of Eq.~\ref{eq:variance_loss}, and $\kappa \in (0, 1)$ is a dimensionless tolerance. The scaling in Eq.~\ref{eq:dynamic_threshold} has a direct interpretation: for two independent latents $z_a, z_b$ drawn from the batch distribution, $\mathbb{E}\Vert z_a - z_b \Vert_2^2 = 2 \operatorname{tr}\big(\mathrm{Cov}(Z)\big)$, so that $\sqrt{2d}\,\bar{\gamma}_n$ approximates (up to $\epsilon$) the root-mean-square distance between independently drawn states, and $\kappa$ expresses the completion tolerance as a fraction of this distance. Moreover, whenever the variance hinge is inactive on the current batch, every per-dimension term satisfies $\Var(Z_{\cdot,j}) + \epsilon \ge \gamma^2$, so the instantaneous estimate in Eq.~\ref{eq:running_margin} is at least $\gamma$. Because $\bar{\gamma}_n$ is a moving average, the threshold is bounded below by $\kappa \sqrt{2d}\,\gamma$ only if the hinge was inactive on every batch that contributed to $\bar{\gamma}_n$ and $\bar{\gamma}_0 \ge \gamma$; under these conditions, the VICReg margin anchors the scale of $\tau$. Whenever the transition rule is evaluated during training (e.g., to segment demonstrations into milestone phases), $\tau = \tau_n$ is used. Once $E_\psi$ is frozen (Section~\ref{subsec:jepa_world_model}), $\bar{\gamma}$ is fixed to its final value $\bar{\gamma}^\star$, and deployment uses the constant threshold $\tau = \tau^\star = \kappa \sqrt{2d}\,\bar{\gamma}^\star$. When the test in Eq.~\ref{eq:state_transition} is satisfied for $m_t = M$, the task is declared complete and the episode terminates.

This neuro-symbolic formulation yields two structural properties. First, the selector output is restricted to the finite set $\mathcal{U}$, which precludes the generation of non-executable or out-of-vocabulary commands. Second, the threshold test turns sub-task completion into an explicit, checkable predicate on the latent state. This predicate does not by itself guarantee that the physical preconditions of the next primitive are satisfied, which depends on the regularity of $E_\psi$ (Section~\ref{subsec:control_stability}), nor does it guarantee progress: if a milestone becomes unreachable, the pointer remains at $m_t$ indefinitely. A per-milestone time-out that triggers primitive re-selection can be used to detect such stalls; its design is deferred to the evaluation phase (Section~\ref{sec:evaluation}).

\subsection{Latent Predictor ($P_\omega$)}
\label{subsec:jepa_world_model}

To generate kinematically stable and physically grounded continuous actions, the control system anticipates the consequences of its decisions.
While generative world models achieve this by predicting future pixel observations $\hat{I}_{t+1}$, pixel-level reconstruction can be computationally demanding for high-frequency control loops and may allocate model capacity toward stochastic visual nuisances (e.g., dynamic shadows or moving backgrounds) rather than relevant physical dynamics.

To address this, the VPA framework introduces a predictor $P_\omega$, parameterized as a multi-layer perceptron or causal transformer, that operates within the low-dimensional latent space $\mathcal{Z}$. Given the current latent state $z_t$ and the selected active primitive $u_t$, the predictor forecasts both the expected future latent state $\hat{z}_{t+1}$ (i.e., $\nu$ control steps ahead; Section~\ref{subsec:transition_reward}) and an associated predictive uncertainty score $\sigma_{t+1}$:
\begin{equation}
    (\hat{z}_{t+1}, \sigma_{t+1}) = P_\omega(z_t, u_t)
    \label{eq:jepa_prediction}
\end{equation}
The uncertainty score is intended to reflect epistemic uncertainty of the latent dynamics model and is used to modulate the downstream execution horizon. Because the prediction loss below does not supervise $\sigma_{t+1}$, we obtain it from the disagreement of a small ensemble \cite{lakshminarayanan2017ensembles}: $P_\omega$ comprises $N_e$ prediction heads $\{P_\omega^{(i)}\}_{i=1}^{N_e}$ with independent initializations, $\hat{z}_{t+1}$ is the ensemble mean of the head outputs $\hat{z}_{t+1}^{(i)}$, and
\begin{equation*}
    \sigma_{t+1} = \frac{1}{\sqrt{d}\,\bar{\gamma}^\star} \left( \frac{1}{N_e} \sum_{i=1}^{N_e} \big\Vert \hat{z}_{t+1}^{(i)} - \hat{z}_{t+1} \big\Vert_2^2 \right)^{1/2},
\end{equation*}
where normalization by the frozen latent scale $\bar{\gamma}^\star$ (Eq.~\ref{eq:running_margin}) renders $\sigma_{t+1}$ dimensionless, so that the sensitivity hyperparameter in Eq.~\ref{eq:dynamic_horizon} does not depend on the scale of $\mathcal{Z}$.

A known challenge in joint-embedding architectures lacking a pixel-level decoder is the vulnerability to representation collapse, in which the encoder and predictor output a constant vector to minimize prediction error. To mitigate this, $P_\omega$ and the encoder $E_\psi$ are jointly optimized using a non-collapsing prediction objective based on Variance-Invariance-Covariance Regularization (VICReg) \cite{bardes2021vicreg}. The objective consists of a latent $\ell_2$ prediction loss, which plays the role of the invariance term, coupled with a distributional regularizer:
\begin{equation}
    \mathcal{L}_{\text{JEPA}} = \mathbb{E}_{t} \left[ \big\Vert \hat{z}_{t+1} - \operatorname{sg}\big(E_{\bar{\psi}}(I_{t+1})\big) \big\Vert_2^2 \right] + \mathcal{L}_{\text{anti-collapse}}(Z), \qquad \mathcal{L}_{\text{anti-collapse}}(Z) = \lambda_v v(Z) + \lambda_c c(Z)
    \label{eq:jepa_loss}
\end{equation}
where $z_{t+1} = E_{\bar{\psi}}(I_{t+1})$ is the target embedding produced by a momentum (exponential-moving-average) copy $E_{\bar{\psi}}$ of the encoder, $\operatorname{sg}(\cdot)$ denotes the stop-gradient operator, and, with an ensemble, the prediction term is applied to each head $\hat{z}_{t+1}^{(i)}$. The regularizer is evaluated on the batch $Z$ of \emph{online}-encoder latents $z_t = E_\psi(I_t)$, since the momentum encoder receives no gradients; the variance and covariance terms $v(Z)$ and $c(Z)$ are defined in Section~\ref{subsec:anti_collapse}. These terms encourage the variance of each latent dimension to remain above the margin $\gamma$ and reduce the off-diagonal covariance to decorrelate the latent variables.

This regularized formulation encourages the prediction error $\Vert \hat{z}_{t+1} - z_{t+1} \Vert_2^2$ to capture task-relevant physical state evolutions (e.g., object displacement, gripper state changes) associated with the primitive $u_t$. Consequently, $\hat{z}_{t+1}$ provides the downstream continuous action solver with predictive physical context, substituting a forward-simulated latent state for the autoregressive reasoning of large language models.

Training proceeds in stages. In the representation-learning stage, $E_\psi$ and $P_\omega$ are trained jointly with Eq.~\ref{eq:jepa_loss}, with $P_\omega$ conditioned on the discrete primitive labels $u_t$ provided by the demonstration data. At the end of this stage, both $E_\psi$ and $P_\omega$ are frozen, which fixes the latent space, its scale $\bar{\gamma}^\star$, and the deployment threshold $\tau^\star$. Freezing provides the downstream solver with a stationary conditioning distribution rather than latent inputs that drift as the representation evolves, which simplifies the learning problem for the motor control module; it does not by itself guarantee the semantic consistency of $\mathcal{Z}$, which depends on the quality of the learned representation.

\subsection{Continuous Action Chunking via Flow-Matching ($\pi_\theta^l$)}
\label{subsec:action_chunking}

While the higher-level modules (Sections~\ref{subsec:primitive_selector} and~\ref{subsec:jepa_world_model}) operate in the semantic and latent predictive spaces, the low-level solver $\pi_\theta^l$ grounds these representations into high-frequency motor commands. To manage compounding execution errors and account for sensorimotor latency, we formulate $\pi_\theta^l$ as an action chunking policy.

At each decision step $t$, rather than predicting a single atomic action, the solver synthesizes an $H$-step continuous action trajectory chunk $A_t = (a_t, a_{t+1}, \dots, a_{t+H-1}) \in \mathbb{R}^{H \times d_a}$, where $d_a$ denotes the dimensionality of the robot's control space (e.g., 7-DoF joint velocities and 1-DoF gripper state) and $H \equiv H_{\max}$ is the fixed length of the generated chunk.

To bridge the latent dynamics with physical embodiment and task intent, we construct a unified conditioning vector $e_t$ that encapsulates intent, current state, predicted future, and real-time kinematics:
\begin{equation}
    e_t = \Concat\big(\Embed(u_t), \, z_t, \, \hat{z}_{t+1}, \, S_t, \, \ctext\big) \in \mathbb{R}^{d_e}
    \label{eq:conditioning_vector}
\end{equation}
where $\Embed: \mathcal{U} \rightarrow \mathbb{R}^{d_u}$ is a learned embedding of the discrete primitive, $z_t, \hat{z}_{t+1} \in \mathbb{R}^d$ are the current and predicted latents, $S_t \in \mathbb{R}^{d_s}$ is the instantaneous proprioceptive state of the robot, $\ctext \in \mathbb{R}^{d_c}$ is the text embedding, and $d_e = d_u + 2d + d_s + d_c$. The continuous components enter Eq.~\ref{eq:conditioning_vector} after the fixed standardization described in Section~\ref{subsec:alignment}.

Generating multimodal trajectory distributions $p(A_t \mid e_t)$ often relies on Denoising Diffusion Probabilistic Models (DDPMs) \cite{ho2020denoising, chi2025diffusion}. However, the iterative sampling procedure of DDPMs, which typically requires many denoising steps, lengthens the sequential chain of network evaluations per decision step and can therefore conflict with the control rates of $30$~Hz or more that motivate this architecture.

We therefore parameterize $\pi_\theta^l$ using a Conditional Flow-Matching (CFM) vector field $v_\theta: \mathbb{R}^{H \times d_a} \times [0, 1] \times \mathbb{R}^{d_e} \rightarrow \mathbb{R}^{H \times d_a}$, which transports a simple base distribution (a standard Gaussian) to the action-chunk distribution via an Ordinary Differential Equation (ODE) \cite{lipman2022flow}. Let $\rho \in [0, 1]$ denote the continuous integration time of the flow, distinct from the discrete control step $t$ and from the latent completion threshold $\tau$. We define a linear interpolation path between Gaussian noise $\xi \sim \mathcal{N}(0, I)$, $\xi \in \mathbb{R}^{H \times d_a}$, and the ground-truth action chunk $A_t$:
\begin{equation}
    A_t^{(\rho)} = \rho A_t + (1 - \rho)\, \xi
    \label{eq:flow_path}
\end{equation}
The neural vector field $v_\theta(A_t^{(\rho)}, \rho \mid e_t)$ is trained to regress the time-derivative of this path, i.e., the target velocity $A_t - \xi$, using the Flow-Matching objective:
\begin{equation}
    \mathcal{L}_{\text{solver}} = \mathbb{E}_{t,\, \xi,\, \rho \sim \mathcal{U}[0,1]} \left[ \big\Vert v_\theta\big(A_t^{(\rho)}, \rho \mid e_t\big) - (A_t - \xi) \big\Vert_2^2 \right]
    \label{eq:flow_loss}
\end{equation}

To mitigate the risk of physical errors caused by inaccurate latent forecasting, the system employs an uncertainty-aware execution strategy. Rather than always executing the full generated chunk, the executed horizon is scaled based on the predictor's uncertainty. We define the executed horizon $H_t$ at decision step $t$ as:
\begin{equation}
    H_t = \max \left( H_{\min}, \, \left\lfloor H_{\max} \cdot \exp\big(-\beta \bar{\sigma}_{t+1}\big) \right\rfloor \right)
    \label{eq:dynamic_horizon}
\end{equation}
where $\beta > 0$ is a sensitivity hyperparameter, $H_{\min} \le H_{\max} = H$ bound the executed chunk length, and $\bar{\sigma}_{t+1}$ is a smoothed uncertainty signal. Using the raw score $\sigma_{t+1}$ directly would allow high-frequency fluctuations in the uncertainty estimate to toggle the controller between long and short horizons on consecutive decision steps, which can induce jitter at the motor level. We therefore apply an asymmetric (fast-attack, slow-release) filter, $\bar{\sigma}_{t+1} = \max\{\sigma_{t+1}, \, (1 - \alpha_\sigma)\bar{\sigma}_t + \alpha_\sigma \sigma_{t+1}\}$ with $\alpha_\sigma \in (0, 1]$, under which the horizon contracts immediately when uncertainty rises but expands only gradually as it decays. When the environment dynamics are predictable (low $\bar{\sigma}_{t+1}$), a long prefix of the chunk is executed; during high-uncertainty events, $H_t$ contracts toward $H_{\min}$, so that the control loop relies more frequently on fresh sensory feedback $I_t$ rather than on the latent prediction.

During closed-loop execution, a chunk $\hat{A}_t \in \mathbb{R}^{H \times d_a}$ is sampled by drawing $\xi \sim \mathcal{N}(0, I)$ and integrating the learned vector field from $\rho = 0$ to $\rho = 1$ with $K$ steps of a numerical ODE solver; with the Euler method, $A^{(\rho_{k+1})} = A^{(\rho_k)} + K^{-1} v_\theta(A^{(\rho_k)}, \rho_k \mid e_t)$ for $\rho_k = k/K$, $k = 0, \dots, K-1$, and $A^{(0)} = \xi$. The first $H_t$ actions of $\hat{A}_t$ are then executed. Although the conditional paths in Eq.~\ref{eq:flow_path} are straight, the learned marginal flow can be curved, particularly for strongly multimodal action distributions; few-step integration is therefore accurate only to the extent that the learned flow is close to straight, a property that rectification procedures explicitly promote \cite{liu2022rectified}. In practice, a small number of steps $K \in \{1, 2, 3\}$ is commonly used. Each decision step thus requires $K$ sequential evaluations of $v_\theta$, each of which processes the full $H \times d_a$ chunk in parallel: the number of sequential evaluations is independent of $H$, whereas the arithmetic cost of each evaluation grows with $H \cdot d_a$ (Section~\ref{subsec:latency_bound}). In the reference implementation, a complete decision step took $10.5\text{--}14.2$~ms on an Apple M4 for $H \le 64$ (Section~\ref{subsec:latency_bound}), which is short relative to the open-loop execution of a chunk at control rates of $30\text{--}100$~Hz. Overlapping the computation of the next chunk with the execution of the current one, which such rates would ultimately require on robotic hardware, was not evaluated in this work (Section~\ref{sec:evaluation}).

\subsection{Representation Alignment and Solver Warm-Starting}
\label{subsec:alignment}

Before deploying end-to-end continuous action chunking, we adopt an initial alignment phase that matches the representation used by the frozen JEPA modules to the input expected by the continuous Flow-Matching Solver $\pi_\theta^l$.

Because the latent space is learned independently of the action data, its coordinates may have arbitrary offsets and a scale $\bar{\gamma}^\star$ that differs substantially from that of the proprioceptive and text inputs. Conditioning the vector field on unnormalized latents can produce large gradients in the early iterations of Eq.~\ref{eq:flow_loss}. After freezing, we therefore standardize the latent components of $e_t$ with fixed statistics, $\tilde{z} = (z - \bar{\mu}_Z) / \bar{\gamma}^\star$, applied to both $z_t$ and $\hat{z}_{t+1}$, where $\bar{\mu}_Z$ is the mean latent over the training data; the proprioceptive state $S_t$ is standardized analogously using dataset statistics. Because the same scale $\bar{\gamma}^\star$ also determines $\tau^\star$ and normalizes $\sigma_{t+1}$, the completion threshold, the uncertainty signal, and the solver inputs are expressed in mutually consistent units.

Utilizing principles of Behavior Cloning, the solver is then warm-started on demonstration datasets: conditioning vectors $e_t$ are computed by the frozen encoder and predictor along demonstration trajectories and paired with the corresponding action chunks, and $v_\theta$ is trained with Eq.~\ref{eq:flow_loss}. This alignment phase is intended to prevent the low-level solver from misinterpreting the latent transitions before any subsequent fine-tuning of the vector field. By establishing a shared representation early in training, the action decoder is aligned with the stationary latent space of the frozen modules, which is expected to facilitate a smoother and more reliable optimization of the flow-matching objective.

\section{Theoretical Analysis and System Properties}
\label{sec:theoretical_analysis}
\begin{figure*}[htbp]
\centering
\resizebox{0.85\textwidth}{!}{%
\begin{tikzpicture}[
    >=stealth,
    box/.style={draw, rounded corners, thick, minimum width=3.2cm, minimum height=0.9cm, align=center, fill=blue!5, font=\small},
    data/.style={draw, minimum width=2.5cm, minimum height=0.7cm, align=center, fill=gray!10, font=\small},
    concat/.style={draw, rounded corners, thick, minimum width=11cm, minimum height=0.9cm, align=center, fill=yellow!10, font=\small}
]

\node[data] (it) at (0, 0) {$I_t$ (Current Frame)};
\node[data] (it1) at (7, 0) {$I_{t+1}$ (Next Frame)};

\node[box] (enc) at (0, -1.5) {Online Encoder $E_\psi$};
\node[data] (dataset) at (3.5, -1.5) {Demonstration Data\\(Primitive Labels)};
\node[box] (enc_target) at (7, -1.5) {Target Encoder $E_{\bar{\psi}}$\\(EMA, stop-gradient)};

\node[data] (zt) at (0, -3) {Online Latent $z_t$};
\node[data] (ut) at (3.5, -3) {Primitive $u_t$};
\node[data] (zt1) at (7, -3) {Target Latent $z_{t+1}$};

\node[box] (reg) at (-3.8, -4.6) {Anti-Collapse on Online Batch $Z$\\$v(Z)$ (Eq.~\ref{eq:variance_loss}) \& $c(Z)$ (Eq.~\ref{eq:covariance_loss})};
\node[box] (predictor) at (2.5, -4.6) {JEPA Predictor $P_\omega$};

\node[data] (zhat) at (2.5, -6.2) {Predicted $\hat{z}_{t+1}$};

\node[box] (mse) at (5.5, -7.8) {Prediction Error\\$\Vert \hat{z}_{t+1} - \operatorname{sg}(z_{t+1}) \Vert_2^2$};

\node[concat] (total) at (1.5, -9.6) {Total Objective (Eq.~\ref{eq:jepa_loss}): $\mathcal{L}_{\text{JEPA}} = \mathbb{E}\Vert \hat{z}_{t+1} - \operatorname{sg}(z_{t+1}) \Vert_2^2 + \lambda_v v(Z) + \lambda_c c(Z)$};

\draw[->, thick] (it) -- (enc);
\draw[->, thick] (it1) -- (enc_target);
\draw[->, thick] (enc) -- (zt);
\draw[->, thick] (enc_target) -- (zt1);

\draw[->, thick] (dataset) -- (ut);

\draw[->, thick, rounded corners] (zt.south) |- (predictor.west);
\draw[->, thick] (ut.south) -- (ut.south |- predictor.north);
\draw[->, thick] (predictor) -- (zhat);

\draw[->, thick, rounded corners] (zhat.south) |- (mse.west);
\draw[->, thick] (zt1.south) -- (zt1.south |- mse.north);

\draw[->, thick, rounded corners] (zt.west) -| (reg.north);

\draw[->, thick] (reg.south) -- (reg.south |- total.north);
\draw[->, thick] (mse.south) -- (mse.south |- total.north);

\end{tikzpicture}
} 
\caption{Offline Representation-Learning Objective. The online encoder $E_\psi$ and the predictor $P_\omega$ are trained jointly. The target latent $z_{t+1}$ is produced by a momentum (exponential-moving-average) copy $E_{\bar{\psi}}$ of the encoder under stop-gradient, and the VICReg variance and covariance terms are evaluated on the batch $Z$ of online latents, since the target branch receives no gradients. During this stage, the predictor is conditioned on the discrete primitive labels $u_t$ supplied by logged demonstration trajectories (Eq.~\ref{eq:jepa_prediction}); the frozen text encoder is not involved. At the end of the stage, $E_\psi$ and $P_\omega$ are frozen, which fixes the latent scale $\bar{\gamma}^\star$ and the deployment threshold $\tau^\star$ (Eq.~\ref{eq:transition_rule}).}
\label{fig:anti_collapse_optimization}
\end{figure*}

This section analyzes the runtime complexity of the VPA pipeline, the properties of its regularized latent representation, and the conditions under which its discrete sub-goal mechanism limits error accumulation across task phases. We distinguish between statements that follow directly from the definitions of the architecture and its objectives (Propositions~\ref{prop:complexity} and~\ref{prop:anti_collapse}) and statements that additionally rely on regularity properties of the learned encoder (Proposition~\ref{prop:phase_bound}). Because the latter properties cannot be enforced by the training objective or by architectural constraints alone, the corresponding conclusions are presented as conditional results and as empirical expectations to be verified in the evaluation of Section~\ref{sec:evaluation}.

\subsection{Computational Complexity and Latency}
\label{subsec:latency_bound}

A primary limitation of contemporary VLA models is their reliance on autoregressive token generation for continuous control. Let $H$ be the length of the required action chunk and $d_a$ the action dimension. Token-based VLAs discretize the continuous action space and generate $N = q \cdot H \cdot d_a$ tokens per inference cycle, where $q$ is the number of tokens per scalar action component.

For an LLM-based policy with parameters $W_{\text{LLM}}$, generating $N$ tokens requires $N$ forward passes, each of which depends on the token produced by its predecessor and therefore cannot be executed in parallel. With key-value caching, the $i$-th pass costs $\Theta(|W_{\text{LLM}}| + L_i \, d_{\text{model}})$ arithmetic operations, where $L_i$ is the attention context length at step $i$ and $d_{\text{model}}$ the hidden width. Moreover, at batch size one, each pass must read all decoder parameters from memory; for models whose parameters exceed on-chip memory, the wall-clock latency therefore satisfies
\begin{equation}
    T_{\text{VLA}} \;=\; \sum_{i=1}^{N} t_{\text{pass}}(L_i) \;\ge\; \frac{N \cdot |W_{\text{LLM}}|}{\mathrm{BW}}, \qquad N = q \cdot H \cdot d_a,
    \label{eq:vla_latency}
\end{equation}
where $t_{\text{pass}}(L_i)$ is the wall-clock time of the $i$-th pass and $\mathrm{BW}$ denotes the effective memory bandwidth, measured in parameters per unit time. The latency of autoregressive action decoding thus grows at least linearly in $N$, and hence in $H$. For RT-2, real-time inference rates of about $1\text{--}3$~Hz for the 55B-parameter model and about $5$~Hz for its 5B-parameter variant have been reported \cite{brohan2023rt}; rates in this range restrict reactive, high-speed physical adjustments.

To analyze the VPA pipeline, we distinguish two cost measures for a single decision step: the \emph{sequential depth} $D$, defined as the length of the longest chain of network evaluations in which each evaluation requires the output of its predecessor, and the total arithmetic \emph{work} $W$. Wall-clock latency is lower-bounded by the sequential depth times the minimum per-evaluation latency, and approaches the work divided by the available parallelism when the hardware is saturated.

\begin{proposition}[Sequential depth and work of a VPA decision step]
\label{prop:complexity}
Consider one decision step of VPA in which the solver generates a chunk of length $H$ with $K$ integration steps, and assume that the milestone embeddings $z_g^{(m)}$ and the text embedding $\ctext$ have been computed once per episode. Then:
\begin{enumerate}
    \item[(i)] the sequential depth is $D_{\text{VPA}} = K + 3$, which is independent of $H$ and $d_a$, whereas autoregressive decoding of $N = q \cdot H \cdot d_a$ tokens has sequential depth $D_{\text{VLA}} = N = \Theta(H \cdot d_a)$;
    \item[(ii)] the work is $W_{\text{VPA}}(H) = C_E + C_h + C_P + K \cdot C_l(H)$, where $C_E$, $C_h$, and $C_P$ are the costs of the encoder, the selector, and the predictor (including all $N_e$ ensemble heads), none of which depends on $H$, and $C_l(H)$ is the cost of one evaluation of $v_\theta$. Since $v_\theta$ outputs an element of $\mathbb{R}^{H \times d_a}$, $C_l(H) = \Omega(H \cdot d_a)$; if $v_\theta$ applies self-attention across the $H$ chunk positions, $C_l(H)$ additionally contains a term quadratic in $H$.
\end{enumerate}
\end{proposition}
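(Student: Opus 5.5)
The plan is to model one decision step as a directed acyclic dependency graph. Its nodes are the network evaluations performed in that step, and there is an edge from one evaluation to another whenever the second consumes an output of the first. The sequential depth $D$ is then the number of nodes on a longest path, and the work $W$ is the sum of the node costs. Under the stated assumption, $z_g^{(m)}$ and $\ctext$ are precomputed per episode and therefore enter as constants, so they contribute no nodes.

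For (i), we list the nodes in the order dictated by Eqs.~\ref{eq:encoder}, \ref{eq:primitive_selection}, \ref{eq:jepa_prediction} and the Euler recursion.
\begin{enumerate}
    \item The encoder evaluation $z_t = E_\psi(I_t)$.
    \item The selector evaluation of Eq.~\ref{eq:primitive_selection}, which requires $z_t$.
    \item The predictor evaluation $P_\omega(z_t,u_t)$, which requires $u_t$. All $N_e$ heads share the same inputs, so they form one layer of parallel nodes rather than a chain.
    \item The $K$ solver evaluations $v_\theta(A^{(\rho_k)},\rho_k\mid e_t)$. Each requires $e_t$ (hence $\hat z_{t+1}$) and the previous iterate $A^{(\rho_k)}$.
\end{enumerate}
This gives a chain of length $3+K$. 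We then argue that no longer chain exists, because every node is one of these. The operations $\Embed$, $\Concat$, standardization, the $\argmax$, the computation of $\sigma_{t+1}$, $H_t$ and the milestone update are not network evaluations in the sense of the definition, or are absorbed into an adjacent node. Since $K$ is a fixed solver parameter and each $v_\theta$ call processes the whole $H\times d_a$ chunk in one evaluation, $D_{\text{VPA}}=K+3$ does not depend on $H$ or $d_a$.

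For the VLA comparison, token $i+1$ is conditioned on token $i$ (Section~\ref{subsec:latency_bound}). The $N$ passes therefore form a single chain, so $D_{\text{VLA}}=N=qHd_a=\Theta(Hd_a)$ for fixed $q\ge1$.

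For (ii), the work is additive over nodes, giving $W=C_E+C_h+C_P+K\,C_l(H)$. The input dimensions of $E_\psi$, $\pi_\phi^h$ and $P_\omega$ (including all $N_e$ heads) are $I_t$, $d$, $d_c$ and $|\mathcal U|$, none of which involves $H$. Hence $C_E$, $C_h$ and $C_P$ are independent of $H$. For the lower bound on $C_l$, any evaluation producing an output in $\mathbb{R}^{H\times d_a}$ must write at least $Hd_a$ scalars, so $C_l(H)=\Omega(Hd_a)$. If $v_\theta$ applies self-attention over the $H$ positions, forming the $H\times H$ score matrix adds a term of order $H^2$ times the head width.

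The main obstacle is making the notion of a "network evaluation" precise enough that the count $K+3$ is exact rather than merely an upper bound. The non-network glue operations, and the parallel ensemble heads, must be shown not to add depth. I would handle this by stating the convention explicitly: only forward passes of $E_\psi$, $\pi_\phi^h$, $P_\omega$ and $v_\theta$ count as nodes, and evaluations that share inputs form parallel nodes at the same depth rather than a sequence.
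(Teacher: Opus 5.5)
Your proposal is correct and follows essentially the paper's argument. You exhibit the dependency chain through one evaluation each of $E_\psi$, $\pi_\phi^h$, and $P_\omega$ (with the $N_e$ heads in parallel) followed by $K$ Euler evaluations of $v_\theta$, note that the $N$ autoregressive passes form a single chain, and obtain the work by summing costs along the chain with the $\Omega(H \cdot d_a)$ output-writing bound. Your explicit argument that no longer chain exists, and your justification of the quadratic self-attention term via the $H \times H$ score matrix, are slightly more careful than the paper's proof, which leaves both implicit.
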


\begin{proof}
(i) Within one decision step, the data dependencies form the chain $I_t \to z_t \to u_t \to (\hat{z}_{t+1}, \sigma_{t+1}) \to e_t \to A^{(\rho_1)} \to \cdots \to A^{(\rho_K)}$: the selector requires $z_t$ (Eq.~\ref{eq:primitive_selection}), the predictor requires $u_t$ (Eq.~\ref{eq:jepa_prediction}), and each Euler step requires the previous iterate (Section~\ref{subsec:action_chunking}). This chain contains one evaluation each of $E_\psi$, $\pi_\phi^h$, and $P_\omega$, and $K$ evaluations of $v_\theta$; the ensemble heads of $P_\omega$ are mutually independent and may be evaluated in parallel. No evaluation in the chain is repeated as a function of $H$. For autoregressive decoding, token $i$ is an input to pass $i+1$, so all $N$ passes lie on a single dependency chain. (ii) The expression for $W_{\text{VPA}}$ follows by summing the costs along the chain. Each evaluation of $v_\theta$ writes $H \cdot d_a$ output entries, which requires $\Omega(H \cdot d_a)$ operations.
\end{proof}

Proposition~\ref{prop:complexity} makes precise the sense in which the VPA solver removes the dependence on the action horizon. The wall-clock latency of a decision step is
\begin{equation}
    T_{\text{VPA}} = \underbrace{t_E + t_h + t_P}_{\text{independent of } H} + \; K \cdot t_l(H),
    \label{eq:vpa_latency}
\end{equation}
where $t_E$, $t_h$, $t_P$, and $t_l(H)$ denote the wall-clock times of the encoder, the selector, the predictor, and one solver evaluation, respectively. Because $K \ll N$ and the sequential dependency on output generation is removed, the number of sequential network evaluations drops from $\Theta(H \cdot d_a)$ to $K + 3 = \mathcal{O}(K)$. This does \emph{not} imply that the per-cycle cost is $\mathcal{O}(1)$ in $H$: by part (ii) of Proposition~\ref{prop:complexity}, the arithmetic work of each solver evaluation grows at least linearly with $H \cdot d_a$, and $t_l(H)$ is approximately constant in $H$ only while a single evaluation of $v_\theta$ fits within the available hardware parallelism; beyond that regime, $t_l(H)$ grows with $C_l(H)$. The reduction in sequential depth is shared with other non-autoregressive action heads (Section~\ref{subsec:related_flow}); the distinguishing latency property of VPA is that all networks on the dependency chain are lightweight relative to a vision-language backbone. The absolute values of the terms in Eq.~\ref{eq:vpa_latency} depend on the sizes of $E_\psi$, $\pi_\phi^h$, $P_\omega$, and $v_\theta$ and on the hardware, and therefore do not follow from Proposition~\ref{prop:complexity}. We measured them for the reference implementation, as described below.

\paragraph{Measured latency on an Apple M4.} We timed the four terms of Eq.~\ref{eq:vpa_latency} inside the complete decision step of the reference implementation (Section~\ref{sec:availability}), sweeping $K \in \{1, 2, 3\}$ and $H \in \{8, 16, 32, 64, 128\}$. The measurements used a Mac mini with an Apple M4 processor (10-core CPU, 32~GB memory), macOS~15.3.1, and PyTorch~2.14.1 with the MPS backend, at batch size one and in float32 arithmetic. Each stage was timed with forward hooks, with device synchronization before every clock read, and each value in Table~\ref{tab:latency} is the median of $100$ runs after $10$ warm-up runs. Because these timings do not depend on the values of the weights, randomly initialized weights were used. The benchmarked configuration uses the default, untuned module sizes of the implementation with a single camera view: $E_\psi$ is a Vision Transformer with $224 \times 224$ input, patch size $16$, width $384$, depth $6$, and latent dimension $d = 256$ ($11.12$M parameters); $\pi_\phi^h$ is a multilayer perceptron with two hidden layers of width $512$ over $N_u = 8$ primitives ($0.79$M); $P_\omega$ comprises $N_e = 5$ ensemble heads with two hidden layers of width $512$ ($2.79$M); and $v_\theta$ is a Transformer of width $256$ and depth $4$ with $d_a = 8$ ($3.55$M at $H = 8$ and $3.58$M at $H = 128$, since only its positional embedding grows with $H$). The networks on the dependency chain thus total approximately $18.3$M parameters. The frozen text encoder is not on this chain, because $\ctext$ is computed once per episode, and is excluded from the measurements.

\begin{table}[t]
\centering
\small
\caption{Measured latency of one VPA decision step on an Apple M4 (MPS backend, batch size one, float32) for the single-camera benchmark configuration described in the text. Entries are medians of $100$ runs, in milliseconds. ``Sum'' is the network time $t_E + t_h + t_P + K \cdot t_l(H)$ of Eq.~\ref{eq:vpa_latency}; ``Full step'' is the complete decision step, which additionally includes non-network operations and the synchronization overhead of the per-stage timing. Because each entry is a separate median, the columns need not add up exactly.}
\label{tab:latency}
\begin{tabular}{rrrrrrrr}
\toprule
$K$ & $H$ & $t_E$ & $t_h$ & $t_P$ & $t_l(H)$ & Sum & Full step \\
\midrule
1 & 8   & 3.732 & 0.333 & 0.692 & 1.473 & 6.229  & 10.481 \\
1 & 16  & 3.766 & 0.336 & 0.711 & 1.482 & 6.295  & 10.617 \\
1 & 32  & 3.714 & 0.334 & 0.687 & 1.485 & 6.220  & 10.484 \\
1 & 64  & 3.740 & 0.333 & 0.692 & 1.568 & 6.333  & 10.726 \\
1 & 128 & 3.867 & 0.337 & 0.715 & 1.865 & 6.783  & 11.166 \\
\addlinespace
2 & 8   & 3.756 & 0.334 & 0.701 & 1.457 & 7.707  & 12.235 \\
2 & 16  & 3.737 & 0.335 & 0.696 & 1.453 & 7.674  & 12.140 \\
2 & 32  & 3.748 & 0.329 & 0.697 & 1.509 & 7.793  & 12.266 \\
2 & 64  & 3.757 & 0.328 & 0.694 & 1.583 & 7.945  & 12.409 \\
2 & 128 & 3.752 & 0.340 & 0.707 & 1.799 & 8.397  & 12.917 \\
\addlinespace
3 & 8   & 3.703 & 0.331 & 0.694 & 1.443 & 9.057  & 13.730 \\
3 & 16  & 3.719 & 0.330 & 0.697 & 1.446 & 9.084  & 13.687 \\
3 & 32  & 3.780 & 0.337 & 0.711 & 1.524 & 9.401  & 14.205 \\
3 & 64  & 3.739 & 0.334 & 0.705 & 1.581 & 9.521  & 14.167 \\
3 & 128 & 3.772 & 0.331 & 0.702 & 1.799 & 10.201 & 14.847 \\
\bottomrule
\end{tabular}
\end{table}

\paragraph{Observations.} The measurements are consistent with Proposition~\ref{prop:complexity} in four respects. First, the implementation audits the order of network evaluations in every decision step and raises an error if it differs from one evaluation each of $E_\psi$, $\pi_\phi^h$, and $P_\omega$ followed by $K$ evaluations of $v_\theta$; this audit passed in all configurations of the sweep, so the sequential depth was $K + 3$ throughout, as stated in part~(i). Second, across all values of $K$ and $H$, $t_E$, $t_h$, and $t_P$ remained within $3.70\text{--}3.87$~ms, $0.33\text{--}0.34$~ms, and $0.69\text{--}0.72$~ms, respectively, consistent with their independence of $H$ in Eq.~\ref{eq:vpa_latency}. Third, $t_l(H)$ remained within $10\%$ of its value at $H = 8$ up to $H = 64$ and rose by $23\text{--}27\%$ at $H = 128$. This is in line with part~(ii): on this hardware, one evaluation of $v_\theta$ appears to fit within the available parallelism up to approximately $H = 64$, beyond which its growing work becomes visible in the wall-clock time. Fourth, each additional integration step lengthened the complete decision step by approximately one solver evaluation (e.g., $10.6$, $12.1$, and $13.7$~ms for $K = 1, 2, 3$ at $H = 16$). For $H \le 64$, the summed network time of Eq.~\ref{eq:vpa_latency} was $6.2\text{--}9.5$~ms, and the complete decision step took $10.5\text{--}14.2$~ms (at most $14.8$~ms at $H = 128$). At $K = 2$ and $H = 16$, the setting used for the LIBERO-10 experiments (Section~\ref{sec:experiments}), the complete decision step took $12.1$~ms. Executed back to back, these step times correspond to approximately $70\text{--}95$ decision steps per second.

\paragraph{Scope of the measurements.} We regard these results as initial empirical support for the sequential-depth analysis rather than as a characterization of deployed performance, for four reasons. (i) The benchmarked configuration uses a single $224 \times 224$ camera view, whereas the LIBERO-10 models of Section~\ref{sec:experiments} use $128 \times 128$ frames and, in the two-camera setting, two views per observation. Because both views are processed as one batch within a single evaluation of $E_\psi$, the sequential depth of the two-camera model remains $K + 3$, but its latency is not reported here. (ii) The complete decision-step times include the synchronization introduced by the per-stage timing and may therefore somewhat overstate the time of an uninstrumented step; conversely, they exclude camera capture and actuation, and thus do not measure closed-loop control frequency. (iii) Table~\ref{tab:latency} reports a single sweep on a single machine. In a repeated sweep, the median difference per timing value was $1.5\%$, $55$ of the $60$ per-stage values agreed within $5\%$, and the largest difference was $19\%$; differences of this size should be regarded as run-to-run variation. (iv) We did not run an autoregressive VLA on the same hardware. The contrast with the rates of about $1\text{--}5$~Hz reported for RT-2 \cite{brohan2023rt} is therefore indicative only, given differences in hardware, model scale, and measurement protocol, and should not be read as a measured speed-up. Measurements on robotic deployment hardware and in closed-loop operation remain future work (Section~\ref{sec:evaluation}).

\subsection{Representation Regularization and Anti-Collapse Properties}
\label{subsec:anti_collapse}

In predictive architectures trained with a mean-squared-error objective and no pixel-level decoder, the encoder and predictor admit a degenerate solution: representation collapse. The vision encoder $E_\psi$ and predictor $P_\omega$ can drive the prediction term $\mathbb{E}\big[\Vert \hat{z}_{t+1} - z_{t+1} \Vert_2^2\big]$ to zero by mapping all observations to an identical constant vector $\zeta \in \mathbb{R}^d$.

To exclude this degenerate solution, VPA incorporates the variance and covariance terms of VICReg \cite{bardes2021vicreg}. Given a batch of $B$ online-encoder latents $Z \in \mathbb{R}^{B \times d}$ evaluated during training, the anti-collapse loss acts through two penalties.

First, to counteract collapse to a point mass, a hinge loss is applied to the standard deviation of each latent dimension $j \in \{1, \dots, d\}$ with target margin $\gamma > 0$:
\begin{equation}
    v(Z) = \frac{1}{d} \sum_{j=1}^{d} \max\left(0, \gamma - \sqrt{\Var(Z_{\cdot, j}) + \epsilon} \right)
    \label{eq:variance_loss}
\end{equation}
where $\epsilon > 0$ is a small numerical constant, with $\epsilon < \gamma^2$, that prevents unbounded gradients at zero variance, and $\Var$ is computed with the same estimator as $C(Z)$ below. The term $v(Z)$ vanishes if and only if $\Var(Z_{\cdot,j}) \ge \gamma^2 - \epsilon$ for every $j$. It is a penalty rather than a hard constraint: at an optimum of the joint objective at which $v(Z) > 0$, some dimensions may remain below the margin. Furthermore, since $v(Z)$ is a batch statistic, it does not imply that $E_\psi$ is injective on the state space; physically distinct states may still be mapped to nearby or identical latents. The hinge also does not bound the per-dimension standard deviations from above, which is why the completion threshold is calibrated to the realized margin $\bar{\gamma}$ rather than to $\gamma$ alone (Eq.~\ref{eq:dynamic_threshold}).

Second, to counteract dimensional collapse, in which several latent dimensions encode the same variable, we penalize the off-diagonal entries of the empirical covariance matrix $C(Z) \in \mathbb{R}^{d \times d}$:
\begin{equation}
    c(Z) = \frac{1}{d} \sum_{i \neq j} \left( [C(Z)]_{i,j} \right)^2
    \label{eq:covariance_loss}
\end{equation}
where $C(Z) = \frac{1}{B-1} \sum_{b=1}^{B} (Z_{b, \cdot} - \bar{Z})(Z_{b, \cdot} - \bar{Z})^\top$ and $\bar{Z}$ is the batch mean. Equation~\ref{eq:covariance_loss} penalizes linear correlations between latent dimensions; it does not penalize nonlinear dependencies, so decorrelation should not be equated with statistical independence.

The regularization term is $\mathcal{L}_{\text{anti-collapse}}(Z) = \lambda_v v(Z) + \lambda_c c(Z)$ (Eq.~\ref{eq:jepa_loss}). The following proposition characterizes what this regularizer does and does not exclude. For tractability, we analyze the idealized objective in which the momentum target encoder coincides with the online encoder, $E_{\bar{\psi}} = E_\psi$, as at a fixed point of the moving-average update, and write $\mathcal{L}(\psi, \omega) = \mathcal{L}_{\text{pred}}(\psi, \omega) + \lambda_v v(Z) + \lambda_c c(Z)$, where $\mathcal{L}_{\text{pred}}$ denotes the prediction term of Eq.~\ref{eq:jepa_loss}.

\begin{proposition}[Anti-collapse properties]
\label{prop:anti_collapse}
Assume $\lambda_v > 0$ and $\epsilon < \gamma^2$.
\begin{enumerate}
    \item[(i)] Every constant encoder $E_\psi \equiv \zeta$, paired with an optimal predictor, attains the objective value $\mathcal{L}_{\text{const}} = \lambda_v \big(\gamma - \sqrt{\epsilon}\big)$. Consequently, if the model class contains parameters $(\psi', \omega')$ with $\mathcal{L}(\psi', \omega') < \lambda_v (\gamma - \sqrt{\epsilon})$, no constant encoder is a global minimizer.
    \item[(ii)] If a batch $Z$ satisfies both $v(Z) = 0$ and $c(Z) = 0$, then $C(Z)$ is diagonal with all diagonal entries at least $\gamma^2 - \epsilon > 0$; hence $C(Z)$ is positive definite and the batch latents span all $d$ dimensions of $\mathcal{Z}$. This can occur only if $B > d$.
\end{enumerate}
\end{proposition}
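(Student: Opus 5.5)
For part~(i), the plan is to evaluate the three terms of $\mathcal{L}$ directly at a constant encoder. If $E_\psi \equiv \zeta$, then in the idealized setting $E_{\bar{\psi}} = E_\psi$ every target $z_{t+1}$ equals $\zeta$. Every input $z_t$ also equals $\zeta$, so the predictor sees only $(\zeta, u_t)$. I will take ``optimal predictor'' to mean one outputting $\hat{z}_{t+1} = \zeta$ on these inputs (for every ensemble head). This requires that the predictor class can represent such an output, e.g.\ an affine output layer with zero weights and bias $\zeta$; I will state this as a mild assumption on the class. Then $\mathcal{L}_{\text{pred}} = 0$, and this is clearly optimal because the prediction term is nonnegative.

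Next, for any batch $Z$ of constant rows, $\Var(Z_{\cdot,j}) = 0$ for every $j$, with the same estimator as $C(Z)$. Hence each hinge term equals $\max(0, \gamma - \sqrt{\epsilon})$, which is $\gamma - \sqrt{\epsilon}$ because $\epsilon < \gamma^2$. Averaging over $d$ gives $v(Z) = \gamma - \sqrt{\epsilon}$. The covariance matrix vanishes, so $c(Z) = 0$. Summing the terms gives $\mathcal{L}_{\text{const}} = \lambda_v(\gamma - \sqrt{\epsilon})$.

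The ``consequently'' clause is immediate. Any global minimizer has value at most $\mathcal{L}(\psi',\omega') < \mathcal{L}_{\text{const}}$. Every constant-encoder configuration has value at least $\mathcal{L}_{\text{const}}$, since the prediction term is $\ge 0$ and $c \ge 0$, while $v$ is fixed at $\gamma - \sqrt{\epsilon}$ regardless of the predictor. So no constant encoder can be a global minimizer.

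For part~(ii), $c(Z) = 0$ is a sum of squares of the off-diagonal entries, so all off-diagonal entries of $C(Z)$ vanish and $C(Z)$ is diagonal. The condition $v(Z) = 0$ is a sum of nonnegative hinges, so each hinge is zero, i.e.\ $\sqrt{[C(Z)]_{jj} + \epsilon} \ge \gamma$, giving $[C(Z)]_{jj} \ge \gamma^2 - \epsilon > 0$. A diagonal matrix with positive diagonal is positive definite. If the centered rows $Z_{b,\cdot} - \bar{Z}$ did not span $\mathbb{R}^d$, some nonzero $w$ orthogonal to all of them would give $w^\top C(Z) w = 0$, a contradiction. So the centered latents, and hence the affine hull of the batch, span all $d$ directions.

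For the bound on $B$, note that $C(Z) = \frac{1}{B-1}\tilde{Z}^\top\tilde{Z}$, where the centered matrix $\tilde{Z}$ has rows summing to zero. Therefore $\operatorname{rank}\tilde{Z} \le B-1$, and positive definiteness requires $d \le B-1$, i.e.\ $B > d$.

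The only delicate point is in~(i): justifying that an optimal predictor attains zero prediction loss requires the mild representability assumption above. Everything else is direct substitution.
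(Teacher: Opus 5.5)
Your proof is correct and follows the paper's argument step by step. In (i) you evaluate $v$, $c$, and the prediction term directly at a constant encoder, and in (ii) you combine the diagonal structure with the margin and the bound $\operatorname{rank} C(Z) \le B-1$. Your additions make explicit three points the paper leaves implicit without changing the route: the representability caveat for the optimal predictor, the remark that every constant-encoder configuration scores at least $\lambda_v(\gamma - \sqrt{\epsilon})$ whatever the predictor, and the centering argument for the rank bound.
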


\begin{proof}
(i) For a constant encoder, every column of $Z$ is constant, so $\Var(Z_{\cdot, j}) = 0$ for all $j$ and $C(Z) = 0$. Each hinge term therefore equals $\max(0, \gamma - \sqrt{\epsilon}) = \gamma - \sqrt{\epsilon}$, giving $v(Z) = \gamma - \sqrt{\epsilon}$, while $c(Z) = 0$. A predictor that outputs $\zeta$ attains $\mathcal{L}_{\text{pred}} = 0$, which is optimal since $\mathcal{L}_{\text{pred}} \ge 0$. Hence $\mathcal{L}_{\text{const}} = \lambda_v(\gamma - \sqrt{\epsilon})$, and any parameters with a strictly smaller objective value exclude constant encoders from the set of global minimizers. (ii) $c(Z) = 0$ forces all off-diagonal entries of $C(Z)$ to vanish, and $v(Z) = 0$ forces $[C(Z)]_{j,j} = \Var(Z_{\cdot, j}) \ge \gamma^2 - \epsilon$ for all $j$. A diagonal matrix with positive diagonal entries is positive definite and has rank $d$; since $\operatorname{rank} C(Z) \le B - 1$, this requires $B > d$.
\end{proof}

Part (i) of Proposition~\ref{prop:anti_collapse} refines a frequently stated but imprecise claim: the positivity of $\lambda_v$ alone does not exclude collapse. Collapse is excluded from the global minimizers only if some non-degenerate representation achieves an objective value below the collapse level $\lambda_v(\gamma - \sqrt{\epsilon})$, a condition that is more readily satisfied for larger $\lambda_v$. Part (ii) shows that, at points where both penalties vanish, total and dimensional collapse are excluded on the training batch. Part (ii) does not apply to the batch size used in Section~\ref{sec:experiments}: with $B = 64$ and $d = 256$, $\operatorname{rank} C(Z) \le 63 < d$, so $v(Z)$ and $c(Z)$ cannot vanish together on a training batch. Neither part implies that latent distances faithfully reflect physical distances: a non-collapsed representation may still be insensitive to task-relevant differences or sensitive to nuisance factors. Non-collapse is therefore necessary but not sufficient for $\Vert z_t - z_g^{(m)} \Vert_2$ to serve as a proxy for task progress; the additional conditions are examined in Section~\ref{subsec:control_stability}.

\subsection{Error Accumulation and Conditional Phase-Transition Bounds}
\label{subsec:control_stability}

In step-by-step imitation learning (Behavioral Cloning), policies are subject to compounding covariate shift: if the learned policy errs with probability $\epsilon_\pi$ per step under the demonstrator's state distribution, its expected total cost over a horizon of $T$ steps (with per-step costs in $[0, 1]$) exceeds that of the demonstrator by at most $\mathcal{O}(T^2 \epsilon_\pi)$, and this quadratic dependence is tight in the worst case \cite{ross2010efficient}. VPA addresses error accumulation through two mechanisms: temporal action chunking within each task phase, and discrete sub-goal thresholding between phases.

\paragraph{Action chunking.} The Flow-Matching solver $\pi_\theta^l$ generates an $H$-step action chunk $A_t$, and executing it open-loop reduces the number of decisions over a horizon of $T$ steps from $T$ to $T/H$. Applying the same reduction at the chunk level, treating each chunk as a single decision whose cost is the sum of $H$ per-step costs and whose error probability under the demonstrator's distribution is $\epsilon_H$, yields
\begin{equation*}
    J(\hat{\pi}) - J(\pi^\ast) \;\le\; \Big(\frac{T}{H}\Big)^2 \cdot H \cdot \epsilon_H \;=\; \frac{T^2}{H}\, \epsilon_H,
\end{equation*}
where $J(\hat{\pi})$ and $J(\pi^\ast)$ are the expected total costs of the learned and demonstrator policies. This bound improves on $T^2 \epsilon_\pi$ only if $\epsilon_H$ grows sublinearly in $H$. Since a chunk contains $H$ actions, a union bound yields only $\epsilon_H \le H \epsilon_\pi$, under which the chunk-level bound coincides with the step-level bound. The benefit of chunking is therefore an empirical property, consistent with prior findings on action chunking \cite{zhao2023learning}, rather than a guarantee. Open-loop execution moreover forgoes sensory feedback within each chunk, a trade-off that the adaptive horizon of Eq.~\ref{eq:dynamic_horizon} is intended to manage.

\paragraph{Sub-goal thresholding.} The milestone pointer $m_t$ advances only when $\Vert z_t - z_g^{(m_t)} \Vert_2 < \tau$ (Eq.~\ref{eq:state_transition}). Whether this latent condition constrains the physical state depends on the regularity of the composite map from task-relevant physical state to latent, $y \mapsto E_\psi(\mathcal{I}(y, \eta))$ (Section~\ref{subsec:state_goal}). We state two conditions on this map, with a slack $\delta_\eta \ge 0$ that accounts for the latent displacement caused by variation of the nuisance factors $\eta$ within the deployment set $\mathcal{N}$; without loss of generality, the same slack is used in both.

\begin{assumption}[Upper regularity]
\label{assump:upper}
There exist $L_E < \infty$ and $\delta_\eta \ge 0$ such that, for all $y, y' \in \mathcal{Y}$ and all $\eta, \eta' \in \mathcal{N}$,
\begin{equation*}
    \big\Vert E_\psi(\mathcal{I}(y, \eta)) - E_\psi(\mathcal{I}(y', \eta')) \big\Vert_2 \;\le\; L_E \Vert y - y' \Vert_2 + \delta_\eta .
\end{equation*}
\end{assumption}

\begin{assumption}[Lower (co-Lipschitz) regularity]
\label{assump:lower}
There exist $\mu_E > 0$ and $\delta_\eta \ge 0$ such that, for all $y, y' \in \mathcal{Y}$ and all $\eta, \eta' \in \mathcal{N}$,
\begin{equation*}
    \big\Vert E_\psi(\mathcal{I}(y, \eta)) - E_\psi(\mathcal{I}(y', \eta')) \big\Vert_2 \;\ge\; \mu_E \Vert y - y' \Vert_2 - \delta_\eta .
\end{equation*}
\end{assumption}

\begin{proposition}[Conditional phase-transition bounds]
\label{prop:phase_bound}
Let $y_g^{(m)}$ denote the target physical configuration of milestone $m$, observed as $I_g^{(m)} = \mathcal{I}(y_g^{(m)}, \eta_g^{(m)})$ with $\eta_g^{(m)} \in \mathcal{N}$, and let $z_g^{(m)} = E_\psi(I_g^{(m)})$.
\begin{enumerate}
    \item[(i)] \emph{Soundness.} Under Assumption~\ref{assump:lower}, a satisfied threshold test implies physical proximity to the milestone:
    \begin{equation}
        \Vert z_t - z_g^{(m_t)} \Vert_2 < \tau \;\Longrightarrow\; \Vert y_t - y_g^{(m_t)} \Vert_2 < \frac{\tau + \delta_\eta}{\mu_E}.
        \label{eq:error_bound}
    \end{equation}
    \item[(ii)] \emph{Completeness.} Under Assumption~\ref{assump:upper} and $\tau > \delta_\eta$, sufficient physical proximity triggers the transition:
    \begin{equation*}
        \Vert y_t - y_g^{(m_t)} \Vert_2 < \frac{\tau - \delta_\eta}{L_E} \;\Longrightarrow\; \Vert z_t - z_g^{(m_t)} \Vert_2 < \tau.
    \end{equation*}
    \item[(iii)] \emph{Phase-onset reset.} Under Assumption~\ref{assump:lower}, let $t_k$ be the decision step at which the pointer advances from $k$ to $k+1$, and let $r_{k+1} = \Vert y_{t_k} - y_g^{(k)} \Vert_2$ denote the physical residual with respect to milestone $k$ at the onset of phase $k+1$. Then $r_{k+1} < (\tau + \delta_\eta)/\mu_E$, irrespective of the trajectory executed during phases $1, \dots, k$.
\end{enumerate}
\end{proposition}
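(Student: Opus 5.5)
The three parts follow directly from Assumptions~\ref{assump:upper} and~\ref{assump:lower}, applied to the pair of physical configurations $(y_t, y_g^{(m_t)})$. First we make explicit the modeling identification from Section~\ref{subsec:state_goal}: the current observation is $I_t = \mathcal{I}(y_t, \eta_t)$ with $\eta_t \in \mathcal{N}$, and the milestone frame is $I_g^{(m_t)} = \mathcal{I}(y_g^{(m_t)}, \eta_g^{(m_t)})$ with $\eta_g^{(m_t)} \in \mathcal{N}$. Hence $z_t = E_\psi(\mathcal{I}(y_t,\eta_t))$ and $z_g^{(m_t)} = E_\psi(\mathcal{I}(y_g^{(m_t)},\eta_g^{(m_t)}))$, and both assumptions apply to this pair. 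That deployment nuisances lie in $\mathcal{N}$ is itself an assumption, and we state it as a hypothesis rather than try to prove it.

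For (i), apply Assumption~\ref{assump:lower} with $y = y_t$ and $y' = y_g^{(m_t)}$:
\[
\mu_E \Vert y_t - y_g^{(m_t)}\Vert_2 - \delta_\eta \le \Vert z_t - z_g^{(m_t)}\Vert_2 < \tau .
\]
Rearranging and dividing by $\mu_E > 0$ gives Eq.~\ref{eq:error_bound}; the inequality stays strict. For (ii), apply Assumption~\ref{assump:upper} to the same pair:
\[
\Vert z_t - z_g^{(m_t)}\Vert_2 \le L_E \Vert y_t - y_g^{(m_t)}\Vert_2 + \delta_\eta < (\tau - \delta_\eta) + \delta_\eta = \tau .
\]
The condition $\tau > \delta_\eta$ is needed only so that the premise is non-vacuous, i.e.\ so that the proximity radius $(\tau-\delta_\eta)/L_E$ is positive. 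If $L_E = 0$, the upper bound alone already gives $\Vert z_t - z_g^{(m_t)}\Vert_2 \le \delta_\eta < \tau$, so we treat $L_E > 0$ and mention the degenerate case separately.

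For (iii), use the transition rule Eq.~\ref{eq:state_transition}. If the pointer advances from $k$ to $k+1$ at decision step $t_k$, then $m_{t_k} = k$ and the test $\Vert z_{t_k} - z_g^{(k)}\Vert_2 < \tau$ held at $t_k$; in the rule, advancement occurs exactly when the test is satisfied. Part (i) at $t = t_k$ then gives $r_{k+1} < (\tau+\delta_\eta)/\mu_E$. The claim ``irrespective of the trajectory'' holds because the bound uses only the state $y_{t_k}$ at which the test fired, together with uniform constants; no quantity from earlier phases enters. The argument is therefore a pointwise application of (i), not an induction over phases. The only point needing care is that the deployment threshold is the constant $\tau^\star$ while the encoder stays frozen, so $\tau$ and the assumption constants are the same at every $t_k$.

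There is no real obstacle beyond this bookkeeping: every step is a one-line rearrangement. The main subtlety is the index convention in (iii). The test is evaluated on $z_{t_k}$, and $m_{t_k+1}$ is the updated pointer, so we define $r_{k+1}$ with $y_{t_k}$, matching the statement. We would also remark that the bound concerns the residual with respect to milestone $k$ at phase onset, not the physical preconditions of the next primitive.
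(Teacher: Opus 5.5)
Your proposal is correct and follows the paper's proof essentially step by step. You apply Assumptions~\ref{assump:lower} and~\ref{assump:upper} to the pair $(y_t,\eta_t)$, $(y_g^{(m_t)},\eta_g^{(m_t)})$ and rearrange, and you obtain (iii) as a pointwise application of (i) at the step where the transition rule fires; your extra remarks (the implicit hypothesis $\eta_t \in \mathcal{N}$, vacuity of the premise when $\tau \le \delta_\eta$, and the degenerate case $L_E = 0$) are valid refinements that the paper leaves implicit.
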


\begin{proof}
(i) Applying Assumption~\ref{assump:lower} with $(y, \eta) = (y_t, \eta_t)$ and $(y', \eta') = (y_g^{(m_t)}, \eta_g^{(m_t)})$ gives $\mu_E \Vert y_t - y_g^{(m_t)} \Vert_2 - \delta_\eta \le \Vert z_t - z_g^{(m_t)} \Vert_2 < \tau$; rearranging yields Eq.~\ref{eq:error_bound}. (ii) Applying Assumption~\ref{assump:upper} to the same pair gives $\Vert z_t - z_g^{(m_t)} \Vert_2 \le L_E \Vert y_t - y_g^{(m_t)} \Vert_2 + \delta_\eta < (\tau - \delta_\eta) + \delta_\eta = \tau$. (iii) At $t_k$, the threshold test is satisfied for milestone $k$ by the definition of Eq.~\ref{eq:state_transition}; part (i) then gives the bound, which depends only on $\tau$, $\delta_\eta$, and $\mu_E$.
\end{proof}

\begin{remark}[Direction of the Lipschitz inequality]
\label{rem:direction}
An upper bound of the form $\Vert z_a - z_b \Vert_2 \le L_E \Vert y_a - y_b \Vert_2$ (Assumption~\ref{assump:upper} with $\delta_\eta = 0$) implies only $\Vert y_t - y_g^{(m_t)} \Vert_2 \ge \Vert z_t - z_g^{(m_t)} \Vert_2 / L_E$, which is a \emph{lower} bound on the physical deviation. An upper Lipschitz condition alone therefore cannot bound the physical error at a transition from above; that conclusion requires the lower regularity of Assumption~\ref{assump:lower}. The two assumptions play complementary roles: Assumption~\ref{assump:lower} makes the threshold test sound (a transition implies physical proximity), whereas Assumption~\ref{assump:upper} makes it complete (physical proximity triggers a transition).
\end{remark}

\paragraph{Consequences and their limits.} Under Assumption~\ref{assump:lower}, part (iii) of Proposition~\ref{prop:phase_bound} implies $\max_k r_k < (\tau^\star + \delta_\eta)/\mu_E$ over all phases that are entered: the physical residual with which each primitive begins is bounded uniformly and independently of the length $M$ of the milestone sequence, so that deviation accumulated in one phase is not propagated beyond this bound into the next. Three qualifications apply. First, the bound concerns phase onsets only; deviations within a phase are governed by the low-level policy and are subject to the chunk-level analysis above. Second, the bound is conditional on each transition occurring: by part (ii), a transition is guaranteed only if the physical state enters the ball of radius $(\tau^\star - \delta_\eta)/L_E$ around the milestone, which requires $\tau^\star > \delta_\eta$, that is, a threshold exceeding the latent displacement induced by nuisance variation; otherwise the pointer may stall (Section~\ref{subsec:primitive_selector}). Third, $\tau^\star$ and $\delta_\eta$ jointly determine a trade-off: increasing $\kappa$ in Eq.~\ref{eq:dynamic_threshold} raises $\tau^\star$, which improves completeness under nuisance variation but loosens the soundness bound of Eq.~\ref{eq:error_bound}.

\paragraph{Why these properties are expectations rather than guarantees.} The constants $L_E$, $\mu_E$, and $\delta_\eta$ are properties of the learned encoder composed with image formation, and none of them is controlled by the objective of Eq.~\ref{eq:jepa_loss}. Regarding Assumption~\ref{assump:upper}, standard dot-product self-attention is not globally Lipschitz \cite{kim2021lipschitz}, so a Vision Transformer encoder possesses no global Lipschitz bound with respect to its input unless explicit constraints are imposed. On a bounded image domain a finite constant exists, but it may be very large and is neither estimated nor controlled during training. Spectral normalization of the linear layers \cite{miyato2018spectral}, combined with a Lipschitz-continuous attention variant \cite{kim2021lipschitz}, can bound the pixel-space Lipschitz constant $L_\psi$ of $E_\psi$. Even then, Assumption~\ref{assump:upper} concerns the composition with the image-formation map $\mathcal{I}$, whose local sensitivity to $y$ can be very large at occlusion and contact events, so that $L_E$ is bounded by $L_\psi$ times the Lipschitz constant of $\mathcal{I}$ only where the latter is finite. Spectral normalization is not part of the base formulation; its effect on representation quality and on the threshold mechanism is a subject for empirical evaluation. Regarding Assumption~\ref{assump:lower}, a lower bound cannot be obtained from architectural constraints of this kind at all. The encoder is designed to be non-injective with respect to pixels, being intended to be invariant to $\eta$, so the lower bound must hold with respect to $y$ alone. It therefore depends on the data and on the learned representation, and it is violated wherever task-relevant differences are not visible in the image, for example under occlusion or object symmetry. The variance term of Eq.~\ref{eq:variance_loss} promotes only batch-level spread and does not imply Assumption~\ref{assump:lower}.

Accordingly, we do not claim that VPA guarantees non-compounding error. We instead state the expectation that, to the extent that the learned encoder satisfies Assumptions~\ref{assump:upper} and~\ref{assump:lower} with favorable constants on the deployment distribution, discrete sub-goal thresholding limits the propagation of physical error across task phases in the manner described by Proposition~\ref{prop:phase_bound}. This expectation is directly testable. In simulation, where the physical state $y_t$ is available, the constants $\mu_E$, $L_E$, and $\delta_\eta$ can be estimated from the lower and upper envelopes of $\Vert z - z' \Vert_2$ as a function of $\Vert y - y' \Vert_2$ over sampled state pairs and nuisance conditions, and the phase-onset residuals $r_k$ can be measured directly (Section~\ref{sec:evaluation}).

\section{Empirical Evaluation}
\label{sec:experiments}

This section reports a first empirical evaluation of the reference implementation on the LIBERO-10 benchmark \cite{liu2023libero}. The experiments are intended as a proof of concept: they test whether the pipeline of Section~\ref{sec:architecture} can be trained on standard demonstration data and executed in closed loop, and they measure the effect of two design choices, the camera configuration and the executed horizon $H_t$. Each configuration was trained once and evaluated on $200$ episodes, so that success rates carry $95\%$ intervals of roughly $\pm 7$ percentage points, and no baselines were evaluated under the same protocol. The results should therefore be read as preliminary. The trained checkpoints, the evaluation summaries, and the code used to produce them are publicly available (Section~\ref{sec:availability}).

\subsection{LIBERO-10 Implementation}
\label{subsec:libero_setup}

\paragraph{Benchmark and data.} LIBERO-10 is the long-horizon suite of LIBERO \cite{liu2023libero}: ten tabletop manipulation tasks in kitchen, living-room, and study scenes, each specified by a natural-language instruction, most of which combine two sub-tasks (e.g., placing two objects in a basket, or placing an object in a drawer and closing it). We trained on the demonstrations provided with the suite, holding out two demonstrations per task ($20$ in total) for validation. Frames were rotated by $180^\circ$, because LIBERO renders them upside down, and used at their stored resolution of $128 \times 128$ pixels. The proprioceptive state $S_t$ concatenated the end-effector position and orientation, the gripper state, and the joint states ($d_s = 15$), and the action space was LIBERO's six-dimensional end-effector displacement with a gripper command ($d_a = 7$). We verified on the dataset files that each recorded frame follows the action with the same index, so that each training pair associates the observation $(I_t, S_t)$ with the chunk of the $H$ subsequent actions.

\paragraph{Primitive labels and milestones.} The formulation of Section~\ref{subsec:primitive_selector} assumes primitive labels supplied with the demonstrations, which LIBERO does not provide. As a stand-in, we derived $N_u = 4$ primitives from the recorded gripper command with a window of $10$ steps: \emph{grasp} for the $10$ steps after the gripper closes, \emph{carry} while it remains closed thereafter, \emph{release} for the $10$ steps after it opens, and \emph{reach} otherwise. This heuristic is not part of the method and was chosen for simplicity; annotated or learned segmentations may behave differently. Each task was specified by a single milestone ($M = 1$): during training, the goal image $I_g^{(1)}$ was the final frame of the same demonstration, and at evaluation, trial $k$ of each task used the final frame of demonstration $k \bmod 50$ in that task's LIBERO demonstration file, so that the $20$ trials used demonstrations $0\text{--}19$, most of which were also training demonstrations. With $M = 1$, the milestone pointer of Eq.~\ref{eq:state_transition} never advances between phases, so the multi-phase sequencing mechanism and the phase-transition analysis of Section~\ref{subsec:control_stability} were not exercised by these experiments. We further note that the goal image provides the policy with information about the target arrangement that language-conditioned policies evaluated on LIBERO do not receive.

\paragraph{Training.} Following Section~\ref{subsec:jepa_world_model}, training proceeded in two stages of $50{,}000$ steps each with batch size $64$. In the first stage, $E_\psi$ and $P_\omega$ were trained from random initialization with Eq.~\ref{eq:jepa_loss}, after which both were frozen, which fixed $\bar{\gamma}^\star$ and $\tau^\star$. In the second stage, every frame was encoded once by the frozen encoder, the conditioning inputs were standardized as described in Section~\ref{subsec:alignment}, and the selector and the solver were trained with the cross-entropy loss of Section~\ref{subsec:primitive_selector} and with Eq.~\ref{eq:flow_loss}, respectively, using the demonstration primitive labels as inputs to $\Embed(u_t)$ and $P_\omega$. The two-camera model differs from the one-camera model only in the input of $E_\psi$: both views are passed through the same Vision Transformer body, and a single linear head maps their concatenated class tokens to $z_t$, so that Eqs.~\ref{eq:primitive_selection}--\ref{eq:dynamic_horizon} are unchanged. The trainable modules of both models contain fewer than $20$M parameters; the frozen CLIP text encoder ($63.4$M parameters) is evaluated once per episode. All settings are listed in Table~\ref{tab:libero_setup}; they are default values of the implementation and were not tuned. Both models were trained on a Google Cloud \texttt{g2-standard-4} instance with a single NVIDIA L4 GPU (24~GB), four vCPUs, and 16~GB of memory. Training took under $2$~hours of wall-clock time for the one-camera model and under $3$~hours for the two-camera model. Most of this time was spent in the first stage; the second stage, which operates on cached latents, took well under an hour.

\begin{table}[t]
\centering
\small
\caption{Settings of the LIBERO-10 experiments. The listed values are defaults of the reference implementation and were not tuned; both models share them except for the camera views.}
\label{tab:libero_setup}
\begin{tabular}{lp{0.62\linewidth}}
\toprule
Component & Setting \\
\midrule
Observation $I_t$ & third-person view, or third-person and wrist views; $128 \times 128$ RGB \\
Proprioception $S_t$ & end-effector position and orientation, gripper and joint states ($d_s = 15$) \\
Actions & six-dimensional end-effector displacement and gripper command ($d_a = 7$) \\
Encoder $E_\psi$ & Vision Transformer, patch size $16$, width $384$, depth $6$, $6$ heads, $d = 256$; random initialization \\
Text encoder & frozen CLIP ViT-B/32 text encoder, $\ctext \in \mathbb{R}^{512}$ \\
Primitives & $N_u = 4$ (reach, grasp, carry, release), derived from the gripper command \\
Milestones & $M = 1$ (final frame of a demonstration) \\
Selector $\pi_\phi^h$ & MLP with two hidden layers of width $512$ \\
Predictor $P_\omega$ & $N_e = 5$ heads, each an MLP with two hidden layers of width $512$; $\nu = 1$ \\
Solver $v_\theta$ & Transformer, width $256$, depth $4$, $4$ heads; $H = 16$, $H_{\min} = 4$, $K = 2$ \\
Eq.~\ref{eq:jepa_loss} & $\lambda_v = 25$, $\lambda_c = 1$, $\gamma = 1$, $\epsilon = 10^{-4}$, momentum of $E_{\bar{\psi}}$: $0.996$ \\
Eqs.~\ref{eq:running_margin}, \ref{eq:dynamic_threshold}, \ref{eq:dynamic_horizon} & $\alpha_\gamma = 0.01$, $\kappa = 0.1$, $\alpha_\sigma = 0.1$, $\beta = 1$ \\
Optimization & AdamW with warm-up and cosine decay; batch size $64$; $50$k $+$ $50$k steps; seed $0$ \\
\bottomrule
\end{tabular}
\end{table}

\paragraph{Evaluation protocol.} Each model was evaluated in closed loop in the LIBERO simulator on $20$ trials per task ($200$ episodes per configuration), starting from LIBERO's fixed initial states with evaluation seed $0$. After each reset, five zero-action steps were executed before control started. An episode was counted as a success if LIBERO's task-completion predicate was satisfied within $600$ control steps. Although Section~\ref{subsec:primitive_selector} specifies that an episode terminates once the test of Eq.~\ref{eq:state_transition} holds for $m_t = M$, this completion flag did not end episodes in our experiments, and success was judged only by LIBERO's predicate. Unless stated otherwise, the executed horizon followed Eq.~\ref{eq:dynamic_horizon} with $\beta = 1$, which in practice yielded $H_t = 15$ of the $H = 16$ generated actions (Section~\ref{subsec:horizon_ablation}). We report success rates with $95\%$ Wilson score intervals. Because each configuration was trained with a single seed, these intervals reflect evaluation variability only, not variability across training runs.

\subsection{Task Success Rates (One vs. Two Cameras)}
\label{subsec:camera_results}

Table~\ref{tab:camera_results} reports success rates for the one-camera model, which observes only the third-person view, as in the single-frame formulation of Section~\ref{subsec:visual_encoding}, and for the two-camera model, which additionally observes the wrist-mounted view. The one-camera model succeeded in $10.5\%$ of episodes ($21/200$; $95\%$ interval $7.0\text{--}15.5\%$) and failed in all trials of six of the ten tasks. Adding the wrist view raised the success rate to $40.0\%$ ($80/200$; $33.5\text{--}46.9\%$). The two aggregate intervals do not overlap, and the two-camera model achieved a higher rate on each of the ten tasks. With $20$ trials per task, however, individual per-task rates carry intervals of up to about $\pm 20$ percentage points, and we do not draw conclusions about particular tasks.

\begin{table}[t]
\centering
\small
\caption{LIBERO-10 success rates (\%) of the one-camera (third-person view) and two-camera (third-person and wrist views) models. Each task was evaluated on $20$ trials; the aggregate row lists $95\%$ Wilson intervals in brackets. One training run per model.}
\label{tab:camera_results}
\begin{tabular}{rlcc}
\toprule
& Task & One camera & Two cameras \\
\midrule
0 & Alphabet soup and tomato sauce into the basket & $0$ & $10$ \\
1 & Cream cheese box and butter into the basket & $0$ & $45$ \\
2 & Turn on the stove and put the moka pot on it & $30$ & $40$ \\
3 & Black bowl into the bottom drawer, then close it & $35$ & $55$ \\
4 & White mug on the left plate, yellow-and-white mug on the right plate & $0$ & $40$ \\
5 & Book into the back compartment of the caddy & $20$ & $50$ \\
6 & White mug on the plate, chocolate pudding to its right & $0$ & $15$ \\
7 & Alphabet soup and cream cheese box into the basket & $0$ & $50$ \\
8 & Both moka pots on the stove & $0$ & $45$ \\
9 & Yellow-and-white mug into the microwave, then close it & $20$ & $50$ \\
\midrule
& All tasks ($200$ episodes) & $10.5$ {\footnotesize $[7.0, 15.5]$} & $40.0$ {\footnotesize $[33.5, 46.9]$} \\
\bottomrule
\end{tabular}
\end{table}

The size of this effect suggests that, in this configuration ($128 \times 128$ frames and an encoder trained from random initialization on the demonstrations alone), the third-person view did not provide the policy with sufficient information on its own. One plausible explanation, which we have not tested directly, is that the wrist view resolves the relation between the gripper and nearby objects more clearly at this resolution. The result also indicates that the single-frame observation assumed in Section~\ref{subsec:visual_encoding} was insufficient in this setting, and that treating $I_t$ as a set of views, which changes only the input of $E_\psi$, is a useful extension of the formulation.

\paragraph{Dependence on the third-person view.} Replacing the third-person view of the two-camera model with a black image at every decision step, while leaving the goal image unchanged, reduced its success rate to $0\%$ ($0/200$; upper $95\%$ bound $1.9\%$). Because a black frame lies outside the training distribution, this test shows that the policy relies on the third-person view, but it does not cleanly measure how much task-relevant information that view contributes.

\paragraph{Training length.} Evaluating the two-camera model at intermediate checkpoints after $40$k, $45$k, and $50$k second-stage steps gave $32.5\%$ ($26.4\text{--}39.3\%$), $36.0\%$ ($29.7\text{--}42.9\%$), and $40.0\%$, respectively. The differences between adjacent checkpoints are within the statistical uncertainty, but the trend suggests that the policy had not converged when training stopped.

\paragraph{Context.} Previously reported success rates on LIBERO-10 are higher than those reported here, including for policies trained without large-scale pre-training \cite{kim2024openvla}. Our protocol differs from the standard one in the number of trials and training seeds and, as noted in Section~\ref{subsec:libero_setup}, in the goal image supplied to the policy, so we do not make a direct comparison. We regard the two-camera result as evidence that the pipeline can learn non-trivial long-horizon behavior from demonstrations, not as evidence of competitive performance.

\subsection{Execution Horizon Ablation}
\label{subsec:horizon_ablation}

The executed horizon $H_t$ determines how many of the $H = 16$ generated actions are executed open loop before the next decision step. Using the final two-camera checkpoint, we compared fixed executed horizons of $2$, $4$, and $8$ actions with Eq.~\ref{eq:dynamic_horizon} at two sensitivities: $\beta = 1$, for which the filtered uncertainty $\bar{\sigma}_{t+1}$ remained small enough that $H_t = 15$, and $\beta = 40.4$, chosen so that the 90th percentile of the per-step uncertainty $\sigma_{t+1}$ observed in the $\beta = 1$ run gives $H_t = H_{\min}$, for which $H_t$ varied with $\bar{\sigma}_{t+1}$ and was usually much shorter than the chunk. All variants used the same checkpoint and the same $200$ initial states. Table~\ref{tab:horizon_ablation} shows the results.

\begin{table}[t]
\centering
\small
\caption{Success rate of the two-camera model on LIBERO-10 as a function of the number of actions executed per decision step ($H = 16$). All rows use the same checkpoint and the same $200$ episodes; intervals are $95\%$ Wilson intervals. The value $\beta = 40.4$ was chosen so that the 90th percentile of $\sigma_{t+1}$ in the $\beta = 1$ run gives $H_t = H_{\min}$; it was not tuned for task success.}
\label{tab:horizon_ablation}
\begin{tabular}{lccc}
\toprule
Executed horizon & Successes & Success (\%) & $95\%$ interval \\
\midrule
Fixed, $H_t = 2$ & $32/200$ & $16.0$ & $11.6\text{--}21.7$ \\
Fixed, $H_t = 4$ & $52/200$ & $26.0$ & $20.4\text{--}32.5$ \\
Fixed, $H_t = 8$ & $67/200$ & $33.5$ & $27.3\text{--}40.3$ \\
Eq.~\ref{eq:dynamic_horizon}, $\beta = 40.4$ (adaptive $H_t$) & $61/200$ & $30.5$ & $24.5\text{--}37.2$ \\
Eq.~\ref{eq:dynamic_horizon}, $\beta = 1$ ($H_t = 15$) & $80/200$ & $40.0$ & $33.5\text{--}46.9$ \\
\bottomrule
\end{tabular}
\end{table}

Success increased monotonically with the number of executed actions, from $16.0\%$ with $H_t = 2$ to $40.0\%$ with $H_t = 15$. The uncertainty-adaptive setting ($\beta = 40.4$) reached $30.5\%$, comparable to a fixed horizon of $8$ actions and below the near-full-chunk setting. Differences between adjacent rows lie within the statistical uncertainty, but the overall trend is consistent.

These results do not support a benefit of uncertainty-adaptive execution in this setting. The best result was obtained when Eq.~\ref{eq:dynamic_horizon} executed almost the entire chunk, which effectively disabled its adaptivity, whereas the configuration in which $H_t$ responded to $\bar{\sigma}_{t+1}$ performed worse. A likely contributing factor is that each new chunk is generated from fresh noise $\xi$, so that frequent replanning can switch between different, individually valid realizations of the same motion; executing a longer prefix reduces the number of such switches. We have not verified this explanation. It also remains open whether a better-calibrated uncertainty estimate, a different form of Eq.~\ref{eq:dynamic_horizon}, or a mechanism that keeps consecutive chunks consistent would change this outcome, and whether adaptive horizons become beneficial in the dynamic settings that motivated them (Section~\ref{sec:evaluation}).

\subsection{Latency Benchmarks}
\label{subsec:latency_results}

Section~\ref{subsec:latency_bound} reports per-stage latency measurements of the reference implementation on an Apple M4 (Table~\ref{tab:latency}); we summarize them here in the context of the LIBERO-10 experiments. For $K \in \{1, 2, 3\}$ and $H \le 64$, a complete decision step took $10.5\text{--}14.2$~ms at batch size one, of which $6.2\text{--}9.5$~ms was spent in the four networks on the dependency chain. At $K = 2$ and $H = 16$, the values used in the LIBERO-10 experiments, the complete step took $12.1$~ms. The sequential depth was $K + 3$ in all measured configurations, the encoder, selector, and predictor times did not change appreciably with $H$, and the time of one solver evaluation remained within $10\%$ of its value at $H = 8$ up to $H = 64$, consistent with Proposition~\ref{prop:complexity}.

Two points limit how these figures relate to the experiments above. First, the latency benchmark used the default module sizes with a single $224 \times 224$ view, whereas the LIBERO-10 models used $128 \times 128$ frames, two views in the two-camera case, and the settings of Table~\ref{tab:libero_setup}. We have not separately benchmarked the released checkpoints, although the released benchmark script accepts the configuration of a trained model. Second, in the LIBERO-10 experiments a decision step was followed by the open-loop execution of $H_t = 15$ actions, so the networks were evaluated once per $15$ control steps, and the simulated episodes did not run in real time. The measured latency therefore indicates the computational headroom of the pipeline rather than a constraint that was active in these experiments. Measurements on robotic deployment hardware and under real-time closed-loop operation remain future work.

\subsection{Future Work}
\label{sec:evaluation}

\begin{figure*}[htbp]
\centering
\resizebox{0.95\textwidth}{!}{%
\begin{tikzpicture}[
    >=stealth,
    box/.style={draw, rounded corners, thick, minimum width=3.5cm, minimum height=1cm, align=center, fill=blue!5, font=\small},
    data/.style={draw, minimum width=2.8cm, minimum height=0.8cm, align=center, fill=gray!10, font=\small},
    env/.style={draw, rounded corners, thick, minimum width=8cm, minimum height=1.2cm, align=center, fill=green!5, font=\small\bfseries},
    metric/.style={draw, dashed, thick, rounded corners, minimum width=2.5cm, minimum height=1cm, align=center, fill=red!5, font=\small},
    concat/.style={draw, rounded corners, thick, minimum width=9cm, minimum height=1cm, align=center, fill=yellow!10, font=\small}
]

\node[metric] (perturb) at (4, 1.8) {Dynamic Perturbations\\(Lighting Shifts, Moving Targets)};

\node[env] (sim) at (4, 0) {GPU-Accelerated Simulators\\(Isaac Sim / ManiSkill3)};
\draw[->, thick, dashed] (perturb.south) -- (sim.north);

\node[data] (img) at (0.5, -2.2) {Visual Stream ($I_t$, $I_g^{(m)}$)\\@ 60--100 Hz};
\node[data] (prop) at (7.5, -2.2) {Proprioception ($S_t$)\\@ 100--500 Hz};

\draw[->, thick, rounded corners] (sim.south) -- (4, -1.1) -| (img.north);
\draw[->, thick, rounded corners] (sim.south) -- (4, -1.1) -| (prop.north);

\node[concat] (agent) at (4, -4.5) {VPA Inference Pipeline\\$K{+}3$ Sequential Evaluations per Decision Step};

\draw[->, thick] (img.south) -- (img.south |- agent.north);
\draw[->, thick] (prop.south) -- (prop.south |- agent.north);

\node[data] (chunk) at (4, -6.5) {Action Chunk $A_t \in \mathbb{R}^{H \times d_a}$\\via CFM Solver ($K$ steps)};
\node[box] (executor) at (4, -8.3) {Low-Level Motor Controller\\Open-Loop Execution of First $H_t$ Steps};

\draw[->, thick] (agent.south) -- (chunk.north) node[midway, right, font=\footnotesize] {Integrate $v_\theta$};
\draw[->, thick] (chunk.south) -- (executor.north);

\draw[->, thick, rounded corners] (executor.west) -- (-4.2, -8.3) -- (-4.2, 0) -- (sim.west);
\node[font=\footnotesize, align=right, left=0.1cm] at (-4.2, -4.15) {Continuous\\Motor Commands\\($a_t \in A_t$)};

\node[metric, align=left] (eval) at (11.8, -4.5) {\textbf{Evaluation Logging}\\$\bullet$ Task Success (\%)\\$\bullet$ Latency Breakdown (ms)\\$\bullet$ Control Frequency (Hz)\\$\bullet$ $v(Z)$, $c(Z)$, Effective Rank\\$\bullet$ Phase Residuals $r_k$\\$\bullet$ Encoder Constants $\hat{\mu}_E, \hat{L}_E, \hat{\delta}_\eta$};
\draw[->, thick, dashed] (agent.east) -- (eval.west);

\draw[->, thick, dashed, rounded corners] (sim.east) -| node[pos=0.25, above, font=\footnotesize] {Privileged State $y_t$} (eval.north);

\end{tikzpicture}
} 
\caption{Planned Closed-Loop Evaluation in Simulation (Future Work). To benchmark high-frequency reactive control, the Goal-Conditioned VPA framework will be deployed in GPU-accelerated simulators subjected to dynamic physical and visual perturbations. High-frequency sensory streams are processed by the VPA pipeline, whose sequential depth of $K + 3$ network evaluations per decision step is independent of the action horizon (Proposition~\ref{prop:complexity}), and the resulting action chunks are executed open-loop for an uncertainty-dependent number of steps $H_t$. This setting is intended to test, under realistic closed-loop conditions, whether VPA avoids the low inference rates reported for autoregressive VLA architectures, such as about $1\text{--}5$~Hz for RT-2~\cite{brohan2023rt}. Because the simulator exposes the privileged physical state $y_t$, the evaluation will also log the quantities required to test the conditional analysis of Section~\ref{subsec:control_stability}.}
\label{fig:evaluation_implementation}
\end{figure*}

The experiments above establish that the pipeline can be trained and executed in a standard benchmark setting, but they leave open several questions that bear directly on the claims of this paper. First, the LIBERO-10 results should be strengthened by following the standard protocol, with several training seeds, $50$ trials per task, all LIBERO suites, and baselines evaluated under identical conditions. Initializing $E_\psi$ from a pre-trained self-supervised encoder \cite{oquab2023dinov2}, which the implementation already supports, and replacing the gripper-derived primitive labels with annotated or learned segmentations are natural next steps. Second, multi-milestone tasks ($M > 1$) are required to exercise the sequencing mechanism of Section~\ref{subsec:primitive_selector}, including the per-milestone time-out deferred there. Third, the properties that motivated the design, namely reactive control under dynamic perturbations and the conditional analysis of Section~\ref{sec:theoretical_analysis}, require environments that support such perturbations and expose the privileged physical state $y_t$.

Section~\ref{sec:theoretical_analysis} establishes two kinds of properties: properties that follow from the architecture and its objectives (Propositions~\ref{prop:complexity} and~\ref{prop:anti_collapse}), and properties that hold only under regularity assumptions on the learned encoder (Proposition~\ref{prop:phase_bound}). The latency measurements of Section~\ref{subsec:latency_bound} provide initial support for the former on one hardware platform; the latter remain untested. We therefore plan an evaluation that benchmarks high-frequency reactive control and visual robustness, areas in which current autoregressive Vision-Language-Action (VLA) models often experience difficulties, together with targeted measurements of the theoretical quantities introduced in Section~\ref{sec:theoretical_analysis}.

\subsubsection{Simulation Environments}
\label{subsec:eval_environments}
To benchmark the system reproducibly prior to physical hardware deployment, we plan to use established GPU-accelerated simulation platforms:
\begin{itemize}
    \item \textbf{NVIDIA Isaac Sim} \cite{nvidia_isaacsim}: This platform is selected for its PhysX physics engine and ray-traced rendering. It is suited to testing robustness against visual distribution shifts, such as dynamic lighting or reflections, and to assessing whether the regularized JEPA representations remain informative without pixel reconstruction.
    \item \textbf{ManiSkill3} \cite{tao2024maniskill3}: This simulator is selected for its standardized manipulation tasks and parallelized tensor-based API. It allows for the rapid evaluation of continuous action chunking and neuro-symbolic transitions across numerous episodic rollouts.
\end{itemize}
Both platforms expose the ground-truth physical state $y_t$ (robot configuration and object poses), which is unavailable on physical hardware and is required for the measurements of Section~\ref{subsec:eval_theory}.

\subsubsection{Planned Benchmark Tasks}
\label{subsec:eval_tasks}
The planned evaluation focuses on dynamic, non-stationary environments in which the low per-step latency measured in Section~\ref{subsec:latency_bound} may translate into a measurable advantage over autoregressive VLAs, for which inference rates of about $1\text{--}5$~Hz have been reported for RT-2 \cite{brohan2023rt}.

\textbf{1. High-Speed Reactive Pick-and-Place:}
Unlike static tabletop manipulation, this task involves moving targets, such as items on a moving conveyor belt or objects physically perturbed mid-grasp. Autoregressive models often experience action lag in these scenarios, executing stale trajectories based on outdated observations. The Flow-Matching solver ($\pi_\theta^l$) will be evaluated on its ability to update action chunks at $>50$~Hz to track and grasp moving targets. In light of Section~\ref{subsec:horizon_ablation}, this task will also test whether frequent replanning can be made beneficial when the target moves.

\textbf{2. Reactive Obstacle Avoidance:}
In this task, the robot is instructed to reach a visual goal, but dynamic obstacles are introduced into the workspace mid-trajectory. This evaluates the neuro-symbolic primitive selector ($\pi_\phi^h$) and the discrete sub-goal thresholding ($\Vert z_t - z_g^{(m_t)} \Vert_2 < \tau$). If an obstacle blocks a trajectory, the system is expected to halt milestone progression or switch primitives to mitigate error accumulation; the frequency and duration of such halts also serve to characterize pointer stalls (Section~\ref{subsec:primitive_selector}).

\textbf{3. Visual Nuisance Generalization:}
To test the nuisance robustness motivated in Sections~\ref{subsec:visual_encoding} and~\ref{subsec:anti_collapse}, and to estimate the nuisance slack $\delta_\eta$ of Section~\ref{subsec:control_stability}, the agent will be trained in a visually sterile environment and tested in environments with visual distractors, such as changing background textures, moving camera perspectives, and variable illumination. The metric of interest is the zero-shot success rate decay compared to pixel-reconstruction baselines.

\textbf{4. Language Intent Disambiguation:}
To evaluate the system's semantic comprehension, this task presents environments in which visual goals are ambiguous without language context (e.g., distinguishing between an instruction to move a cup and an instruction to throw it away). This benchmark tests whether the text embedding $\ctext$ produced by the lightweight frozen encoder provides sufficient conditioning for the primitive selector and the flow-matching solver, which are the modules that receive it (Eqs.~\ref{eq:primitive_selection} and~\ref{eq:conditioning_vector}), to resolve the overarching task intent without a large language model.

\subsubsection{Verification of Theoretical Properties and Ablations}
\label{subsec:eval_theory}
Beyond task-level performance, we plan the following measurements, each of which targets a specific statement or design choice of Sections~\ref{sec:architecture} and~\ref{sec:theoretical_analysis}:
\begin{enumerate}
    \item \textbf{Latency on deployment hardware (Proposition~\ref{prop:complexity}):} Extend the measurements of Section~\ref{subsec:latency_bound} to robotic deployment hardware and to trained configurations, including the two-camera encoder, by measuring $T_{\text{VPA}}$ and its components $t_E$, $t_h$, $t_P$, and $t_l(H)$ (Eq.~\ref{eq:vpa_latency}) as $H$ and $K$ are varied, together with the end-to-end closed-loop control frequency.
    \item \textbf{Encoder regularity (Assumptions~\ref{assump:upper} and~\ref{assump:lower}):} Using the privileged state $y_t$, sample state pairs under controlled nuisance variation and estimate $\hat{\mu}_E$, $\hat{L}_E$, and $\hat{\delta}_\eta$ from the lower and upper envelopes of $\Vert z - z' \Vert_2$ as a function of $\Vert y - y' \Vert_2$ (e.g., via quantile regression), with and without spectral normalization of the encoder.
    \item \textbf{Phase-onset residuals (Proposition~\ref{prop:phase_bound}):} On tasks with $M > 1$, record the residual $r_k$ at every milestone transition and compare its empirical distribution with the conditional bound $(\tau^\star + \hat{\delta}_\eta)/\hat{\mu}_E$, together with the rates of premature transitions, missed transitions, and stalls under a per-milestone time-out.
    \item \textbf{Threshold calibration (Eq.~\ref{eq:transition_rule}):} Compare a fixed absolute threshold with the variance-margin-scaled threshold of Eq.~\ref{eq:dynamic_threshold} across training checkpoints, and sweep the tolerance $\kappa$ to trace the trade-off between soundness and completeness described in Section~\ref{subsec:control_stability}.
    \item \textbf{Horizon stability (Eq.~\ref{eq:dynamic_horizon}):} Building on Section~\ref{subsec:horizon_ablation}, compare the raw uncertainty score $\sigma_{t+1}$ with the filtered signal $\bar{\sigma}_{t+1}$ in terms of the frequency of horizon switches and the magnitude of action discontinuities at chunk boundaries, and examine whether keeping consecutive chunks consistent changes the effect of the executed horizon on task success.
\end{enumerate}

\subsubsection{Baselines and Evaluation Metrics}
\label{subsec:eval_metrics}
The VPA architecture will be benchmarked against autoregressive VLA models (e.g., RT-2 \cite{brohan2023rt}), generalist policies with continuous action heads (e.g., Octo \cite{octo2024} and $\pi_0$ \cite{black2024pi}), and diffusion-based behavior cloning policies \cite{ho2020denoising, chi2025diffusion}, under identical observations, training data, and evaluation protocols. Including continuous-head baselines isolates the contribution of the lightweight latent pipeline from that of non-autoregressive action generation, since the latter reduction in sequential depth is shared by these methods (Section~\ref{subsec:latency_bound}). The primary evaluation metrics will include:
\begin{enumerate}
    \item \textbf{Task Success Rate (\%):} Measured across static and dynamic environments, with several training seeds.
    \item \textbf{End-to-End Inference Latency (ms):} Wall-clock time from image capture to motor command generation, reported together with its per-module breakdown and the deployment hardware.
    \item \textbf{Control Frequency (Hz):} The maximum stable closed-loop execution rate.
    \item \textbf{Representation Health:} The variance and covariance penalties $v(Z)$ and $c(Z)$, together with the effective rank of $C(Z)$ \cite{roy2007effective}, evaluated for the frozen visual encoder under visual shift.
    \item \textbf{Phase-Transition Fidelity:} The distribution of phase-onset residuals $r_k$ and the rates of premature transitions, missed transitions, and stalls.
    \item \textbf{Execution Smoothness:} The frequency of changes in the executed horizon $H_t$ and the magnitude of action discontinuities at chunk boundaries.
\end{enumerate}

\section{Conclusion}
\label{sec:conclusion}

The pursuit of generalized embodied intelligence has recently been supported by the scaling of autoregressive Vision-Language-Action (VLA) models. However, the sequential token generation on which many of these models rely makes them challenging to deploy for the high-frequency, reactive demands of physical robotic control.

In this paper, we introduced the Goal-Conditioned Vision-Predictor-Action (VPA) framework, proposing a shift from text-conditioned autoregressive decoding to latent-predictive continuous control, and implemented it as an open-source reference implementation. By decoupling high-level task logic from low-level motor execution and utilizing a lightweight frozen text encoder to capture user instructions, the VPA framework aims to retain semantic task comprehension without relying on computationally heavy Large Language Models (LLMs).

Our analysis establishes three results at different levels of strength. First, formulating multi-step action chunking through Conditional Flow-Matching (CFM) reduces the number of sequential network evaluations per decision step from $\Theta(H \cdot d_a)$ for autoregressive token decoding to $K + 3$, a quantity independent of the action horizon $H$; the arithmetic cost of each solver evaluation nevertheless still scales with the chunk dimensionality $H \cdot d_a$, so the per-cycle cost is not $\mathcal{O}(1)$ in $H$. Second, the VICReg variance and covariance penalties exclude fully collapsed representations from the global minimizers of the joint-embedding objective whenever a non-degenerate representation attains a lower objective value, and they exclude dimensional collapse on batches where both penalties vanish; the realized variance margin further provides a principled scale for calibrating the milestone completion threshold $\tau$. Third, discrete sub-goal thresholding bounds the physical residual at each phase transition by $(\tau + \delta_\eta)/\mu_E$, but only under a co-Lipschitz condition on the encoder with respect to task-relevant physical state. Because Vision Transformers are not globally Lipschitz without explicit constraints, and because this lower bound cannot be enforced through architectural constraints alone, we regard the containment of error across task phases as an empirical expectation rather than a formal guarantee.

\paragraph{Initial empirical validation.} The reference implementation provides a first proof of concept for the framework (Section~\ref{sec:experiments}). The trainable modules contain fewer than $20$M parameters, and the frozen CLIP text encoder ($63.4$M parameters) is evaluated only once per episode, outside the per-step dependency chain. Each model trained in under three hours on a single NVIDIA L4 GPU. On an Apple M4, a complete decision step of the single-camera reference configuration took $10.5\text{--}14.2$~ms for $K \in \{1, 2, 3\}$ and $H \le 64$, and the per-stage timings were consistent with the horizon-independent sequential depth of Proposition~\ref{prop:complexity} (Section~\ref{subsec:latency_bound}). On LIBERO-10, with a vision encoder trained from random initialization on the demonstrations alone, the policy succeeded in $40.0\%$ of $200$ evaluation episodes ($95\%$ interval $33.5\text{--}46.9\%$) when observing a third-person and a wrist-mounted camera, compared with $10.5\%$ with the third-person camera alone. In an ablation of the executed horizon, executing nearly the full action chunk before replanning gave the best results. We regard these results as promising but preliminary: they indicate that a latent control loop without a language-model decoder can be trained from demonstrations to complete long-horizon manipulation tasks in a meaningful fraction of trials, while its per-step computation remains small enough to run at low latency on commodity hardware.

\paragraph{Limitations.} The repertoire of high-level behaviors is bounded by the fixed primitive dictionary $\mathcal{U}$, and zero-shot synthesis of skills outside $\mathcal{U}$ is not addressed by the present formulation. The empirical results come from a single training run per configuration with $20$ trials per task. They were not compared with baselines under the same protocol, and they remain below success rates previously reported on this suite. The experiments used a single goal image per task ($M = 1$) and primitive labels derived heuristically from the gripper command, so the multi-milestone sequencing mechanism and the conditional analysis of Proposition~\ref{prop:phase_bound} were not exercised; the encoder constants $\mu_E$, $L_E$, and $\delta_\eta$ and the phase-onset residuals $r_k$ have not yet been measured. The uncertainty-adaptive execution horizon of Eq.~\ref{eq:dynamic_horizon} did not improve task success in our ablation (Section~\ref{subsec:horizon_ablation}). Latency was measured for a single-camera reference configuration on one hardware platform, not for the released checkpoints or on robotic deployment hardware, and closed-loop control frequency under real-time operation was not measured. Finally, the single-frame observation of the original formulation proved insufficient in our experiments, and the reported success rate depends on extending $I_t$ to two camera views.

We therefore view VPA as an analytically characterized architecture with an initial working implementation, whose theoretical properties are stated together with the assumptions they require and whose empirical support is, so far, limited to the setting reported here. The next steps are to evaluate the framework under the standard LIBERO protocol with baselines, on multi-milestone tasks, and in the dynamic simulation environments outlined in Section~\ref{sec:evaluation}. We expect the explicit separation between unconditional and conditional properties to provide a sound basis for this work, and we hope that the released code and checkpoints (Section~\ref{sec:availability}) will help others reproduce, test, and improve upon these results.

\bibliographystyle{unsrt}
\bibliography{references}

\end{document}